\newif\ifproofs\proofsfalse\ifproofs\RequirePackage[displaymath,mathlines]{lineno}\fi
\newif\ifsubsections
\long\def\commentout#1{}
	\definecolor{linkred}{rgb}{0.7,0.2,0.2}
	\definecolor{linkblue}{rgb}{0,0.2,0.6}
	\definecolor{linkred}{rgb}{0.0,0.0,0.0}
	\definecolor{linkblue}{rgb}{0,0.0,0.0}
\tikzstyle arrowstyle=[scale=1]
\tikzstyle directed=[postaction={decorate,decoration={markings,
    mark=at position .65 with {\arrow[arrowstyle]{stealth}}}}]
\tikzstyle reverse directed=[postaction={decorate,decoration={markings,
    mark=at position .65 with {\arrowreversed[arrowstyle]{stealth};}}}]
\theoremstyle{plain}
\newtheorem{theorem}[equation]{Theorem}
\newtheorem{lemma}[equation]{Lemma}
\newtheorem{prop}[equation]{Proposition}
\theoremstyle{definition}
\newtheorem*{remark}{Remark}
\numberwithin{equation}{section}
\newcommand{\N}{\mathbb{N}}
\newcommand{\R}{\mathbb{R}}
\newcommand{\C}{\mathbb{C}}
\newenvironment{problem}[2][Problem]{\begin{trivlist}
		\item[\hskip \labelsep {\bfseries #1}\hskip \labelsep {\bfseries #2.}]}{\end{trivlist}}
\DeclareRobustCommand{\ddbar}{%
  {\mathord{\text{\lower0.05ex\hbox{$\mathchar'26$}{$\mkern-11mu d$}}}}%
}
\newcommand{\dts}{\ddbar s}
\newcommand{\beq}{\begin{equation}}
\newcommand{\eeq}{\end{equation}}
\newcommand{\beqs}{\begin{equation*}}
\newcommand{\eeqs}{\end{equation*}}
\begin{document}
	\title{Long-time asymptotic behaviour for the fifth order Modified Korteweg-de Vries equation }
	\author{Fudong Wang,Wen-Xiu Ma}
	\address{Department of Mathematics and Statistics, University of South Florida, Tampa }
	\email{
		fudong@mail.usf.edu}
	\date{\today}
	
	\begin{abstract}%
	Following Deift-Zhou's nonlinear steepest descent method, the long-time behaviour for the Cauchy problem of the 5th modified Korteweg-de Vries equation is analysed. Based on the inverse scattering transform, the 5th order MKdV is transformed to an 2 by 2 oscillatory Riemann-Hilbert problem, then by manipulating the Cauchy operator and reducing the degree of phase function, the long-time asymptotic of the solution is given in terms of solutions of the parabolic cylinder equation.
	\end{abstract}
	
	\maketitle
	\tableofcontents
	\thispagestyle{empty}
	
		\section{Introduction}
	In this article we apply Deift-Zhou's non-linear steepest descent method to the 5th order MKdV equation:
	\beq\label{eq:5th mkdv}
	q_t=30q^4q_x-10q^2q_{xxx}-40qq_{xx}q_x-10q_x^3+q_{xxxxx}
	\eeq
	which belongs to the AKNS hierarchy~\cite{AKNS1974}. As is well known in the integrable system theory,by means of inverse scattering transform, for any Schwartz initial data, the solution to the equations of the AKNS hierarchy exists globally in time. However there is no way to solve them explicitly in general. In the past two decades, asymptotic method plays a crucial role in the integrable system theory.  The first systematic method to study the long-time asymptotic behaviour is due to Deift and Zhou\cite{deift_steepest_1993}. In Deift and Zhou's work, they direct consider a Riemann-Hilbert problem(RHP) and deform it to a model RHP which can be solve in terms of solutions of the Weber's parabolic cylinder equation. This can be consider as a nonlinear generalization of the classical method of steepest descent. Later on in 1996, Varzugin\cite{Varzugin1996} generated the classical method of stationary phase to solve the oscillatory RHP, and worked out the asymptotic expansions for the whole AKNS hierarchy with Schwartz initial data, where the error term is controlled by $O(t^{-3/4}\log(t))$. Many works are done for the KdV, NLS, mKdV, etc, which are all of order three or less. Recently, many studies about long time asymptotic form some 3 by 3 Riemann-Hilbert problem even 4 by 4 RHP are showing up, see for example\cite{MA2019,GENG2019151,Boutet2013} In some sense, the long-time asymptotic behaviour for the 5th order MKdV with Schwartz initial data was included in Varzugin's work implicitly. The purpose of writing this article is to study the long-time asymptotic of the 5th order MKdV explicitly.
	
	In our work, we will using the Deift-Zhou's method to study the long-time asymptotic behaviour of 5th MKdV which gives a better error terms ($O(t^{-1}\log(t))$) comparing to the method of stationary phase. The difficulties come from the high order of the phase function, which is a fifth order polynomial in our case. We will mainly follow Deift and Zhou's paper and do the necessary adjustments whenever involve the phase function. 
	
	The main theorem is as following:
	\begin{theorem}  Given $q(x,0)\in \mathcal{S}(\R)$ and its associated reflection coefficient $r(z)$,in the linear-like oscillation\cite{AS1977} region $-x=O(t)$,the long-time behaviour of the solution to the 5th MKdV, i.e. the leading term of the solution $q(x,t)$ to MKdV as $t\rightarrow \infty$, can be written as following:
		\begin{equation}
		\label{main result}
			q(x,t)=-2(\frac{v}{640tz_0^3})^{1/2}\cos{\left(-128tz_0^5+v\log{(2560tz_0^5)}+\phi(z_0)\right)}+O(\frac{\log(t)}{t})
		\end{equation}
		where
		\begin{equation}
		\phi(z_0)=\frac{5\pi}{4}-\arg(\bar{r}(-z_0))-\arg{(\Gamma(-\nu i))}+\frac{1}{\pi}\int_{-z_0}^{z_0}\log\frac{1-|r(s)|^2}{1-|r(-z_0)|^2}\frac{ds}{s+z_0}
		\end{equation}
		and $z_0=(|\frac{x}{80t}|)^{1/4}$.

	\end{theorem}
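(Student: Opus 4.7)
The strategy is to execute the Deift-Zhou nonlinear steepest descent method, adjusting only those steps where the fifth-order phase genuinely changes the analysis. First, via the inverse scattering transform, I would recast the Cauchy problem as an oscillatory $2\times 2$ RHP on $\RR$ for a matrix $m(z;x,t)$, whose jump is built from the reflection coefficient $r(z)$ and the oscillatory factor $e^{it\theta(z)}$ with $\theta(z)=16z^5+(x/t)z$. In the regime $-x=O(t)$, the real stationary points of $\theta$ are the roots of $\theta'(z)=80z^4+x/t=0$, namely $\pm z_0$ with $z_0=(|x/(80t)|)^{1/4}$; in particular $t\theta(\pm z_0)=\mp 64 t z_0^5$ and $\theta''(\pm z_0)=\pm 320 z_0^3$, which already account for the constants $-128tz_0^5$ and $640 t z_0^3$ in the theorem. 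A signature analysis of $\mathrm{Re}(it\theta)$ around $\pm z_0$ singles out the four sectors in which the triangular factors of the standard jump factorisation decay exponentially; this signature table is the only place where the degree of $\theta$ enters explicitly.

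Second, following the standard recipe I would perform the scalar conjugation
\beq
\delta(z)=\exp\left(\frac{1}{2\pi i}\int_{-z_0}^{z_0}\frac{\log(1-|r(s)|^2)}{s-z}\,ds\right),
\eeq
which solves the scalar RHP on $(-z_0,z_0)$, and replace $m$ by $m\delta^{-\sigma_3}$. The new jump matrix admits triangular factorisations on both sides of each stationary point, so the contour can be opened into lenses and then deformed onto steepest descent rays emanating from $\pm z_0$; the contributions away from the rays are $O(e^{-ct})$.

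The heart of the argument is the local parametrix. After a Taylor expansion $\theta(z)=\theta(\pm z_0)+\tfrac{1}{2}\theta''(\pm z_0)(z\mp z_0)^2+R_\pm(z)$ and a linear rescaling centred at $\pm z_0$ on a disc of radius $\sim t^{-1/2}\log t$, the higher-order remainder $tR_\pm(z)$ is uniformly $O(t^{-1/2}(\log t)^3)$, so the rescaled RHP converges to the exactly solvable parabolic cylinder model of Deift-Zhou with parameter $\nu=-\frac{1}{2\pi}\log(1-|r(\pm z_0)|^2)$ and effective reflection coefficient built from $r(\pm z_0)$ and the boundary values of $\delta$ at $\pm z_0$. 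The main obstacle is the uniform control of $tR_\pm$ on this shrinking disc: the fifth-order phase produces cubic, quartic and quintic corrections to the quadratic model, and it is precisely this refined estimate that upgrades Varzugin's $O(t^{-3/4}\log t)$ bound to the sharper $O(t^{-1}\log t)$ claimed here.

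Finally, the MKdV symmetry $r(-z)=-\overline{r(z)}$ forces the two contributions at $\pm z_0$ to combine into a single real cosine. Reconstructing $q$ through $q(x,t)=2i\lim_{z\to\infty}(zm(z))_{12}$, extracting the $t^{-1/2}$ coefficient from the off-diagonal entries of the parabolic cylinder parametrix, and reading off the phase from $\arg r(-z_0)$, $\arg\Gamma(-i\nu)$, and the logarithmic factor produced by $\delta$, assembles the claimed formula with the stated $\phi(z_0)$.
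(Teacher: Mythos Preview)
Your outline is broadly on target and agrees with the paper in its skeleton: RHP formulation, the scalar $\delta$-conjugation, localisation to the two stationary points, the parabolic-cylinder model, and the use of the real-$q$ symmetry of $r$ to collapse the two contributions into a cosine. The constants you extract from $\theta(\pm z_0)$ and $\theta''(\pm z_0)$ are exactly the ones that appear. Two points, however, deserve comment.

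\medskip
\textbf{A genuine gap.} You write that after the $\delta$-conjugation ``the contour can be opened into lenses and then deformed onto steepest descent rays emanating from $\pm z_0$; the contributions away from the rays are $O(e^{-ct})$.'' This is only true if the triangular factors built from $r$ admit analytic continuation off $\RR$, which for merely Schwartz data they do not. The paper spends an entire section on this: each scalar entry $\rho$ is split as $\rho=h_1+h_2+R$ via a Fourier-based construction, where $h_1$ stays on $\RR$ but satisfies $|e^{\mp 2it\theta}h_1|\le c\,t^{-k}$, $h_2$ extends to the ray $L$ with the same bound, and $R$ is rational (hence entire) and carries the leading term. Only $R$ is put on the deformed contour; the real-axis remainder from $h_1$ and the analytic remainder from $h_2$ are $O(t^{-k})$, \emph{not} $O(e^{-ct})$. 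Without this decomposition (or an equivalent $\bar\partial$-extension argument), the lens opening is unjustified and the error control breaks down. This is the step where the fifth-order phase genuinely enters, since the bounds on $h_2$ and $R$ depend on $\Re(i\theta)$ along the ray, and the paper verifies $\Re(i\theta)\ge 16c(\alpha)z_0^5u^2$ there.

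\medskip
\textbf{A methodological difference.} For the local analysis you propose a shrinking disc of radius $\sim t^{-1/2}\log t$ and bound the higher Taylor terms of $t\theta$ by $O(t^{-1/2}(\log t)^3)$. The paper does something different: it keeps a fixed-size cross at each stationary point, rescales by $a=\sqrt{640tz_0^3}$, and proves a pointwise jump comparison of the form
\[
\bigl|\,\bar R(z/a+z_0)(\delta_B^1)^{-2}-\bar R(z_0\pm)\,z^{-2i\nu}e^{iz^2/2}\,\bigr|\le C(z_0)\,e^{-\gamma\,\Im(z^2/2)}\,t^{-1/2}\log t
\]
valid on the \emph{entire} rescaled ray. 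Fed through the Beals--Coifman resolvent identity and multiplied by the prefactor $1/a\sim t^{-1/2}$, this yields the $O(t^{-1}\log t)$ error directly, with no moving boundary and no separate global parametrix to match. Your shrinking-disc scheme can be made rigorous, but it requires a global outer parametrix and a small-norm estimate on the shrinking circle; you should check that the extra $(\log t)^3$ from the cubic term, together with the circle contribution, still lands you at $O(t^{-1}\log t)$ rather than something weaker.
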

	
	The outline of the article is as following: In section 2 we simply formula the inverse scattering transform for the 5th MKdV and its connection to a oscillatory RHP. In section 3 we will introduce the solution method to RHP due to Beals and Coifman\cite{Beal1984}, which connects a singular integral equation with RHP. In section 4 a scalar RHP is introduce along with some estimates on the solution. This scalar RHP will be used to conjugate the matrix RHP and in preparing for contour deformation. Also some of the estimates will be used when we reduce the original RHP to a model RHP. In section 5, we will give the fundamental decomposition lemma in order to decompose a Schwartz function three parts, a non-analytic small term, a analytic term and a rational function. Contour deformation and truncation will be based on this lemma. In section 6 and 7 we will perform contour deformation($\Sigma^\sharp$) and truncation ($\Sigma'$) of the RHP, and give the error terms. In section 8, we will reduce the phase function (5th order) to a second order phase. Then we will separate the contributions of the two crosses $\Sigma_{A'}$ and $\Sigma_{B'}$. In section 9, reduce the RHP on $\Sigma_{A'(B')}$ to a model RHP and solve it in terms of parabolic cylinder equation.
	
	\section{IST and RHP}
	\subsection{Inverse scattering problem formulaism}
	In this section, we will formula the scattering and inverse scattering for initial value $q(x,t=0)\in \mathcal{S}(\R)$. First we consider the direct scattering problem and set $t=0 $:
	\begin{equation}
	\label{directscattering}
	\psi_x(x;z)=\left(iz\sigma_3+\begin{pmatrix}
	0&q(x,t)\\
	\overline{q(x,t)}&0\\
	\end{pmatrix}\right)\psi(x;z)
	\equiv U\psi\end{equation}
	where $\sigma_3=\begin{pmatrix}
	1&0\\
	0&-1\\
	\end{pmatrix}$.
	Then following the standard scattering method\cite{Beal1984,AS1977}, set $\mu=\psi e^{-ixz\sigma_3}$, and rewrite equation\eqref{directscattering} as following:
	\begin{equation}\label{eq:mu equation}
	\partial_x\mu=iz[\sigma_3,\mu]+Q\mu,\quad Q=\begin{pmatrix}
	0&q(x,t)\\
	\overline{q(x,t)}&0\\
	\end{pmatrix}
	\end{equation}
	In order to analysis the propeties of $\mu$, consider the following two integral equations($z\in \R$):
	\begin{equation}
	\mu_{\pm}(x;z)=I+\int_{\pm \infty}^x e^{i(x-y)\text{ad }\sigma_3}Q(y)\mu_{\pm}(y;z)dy
	\end{equation}
	By method of The Neumann series for the Volterra equations, we will see that these equations have unique bounded continuous solutions for $x,z\in \R$ provided that $q\in \mathcal{S}$.
	
	From ODE theory, any two solutions of equation\eqref{directscattering} are connected by a matrix independent of $x$, i.e. $\psi_+=\psi_-S(t;z)$, where $\psi_{\pm}=\mu_{\pm}e^{ixz\sigma_3}$. Since $\mu_{\pm}$ are normalized to identity matrix as $x\rightarrow \pm \infty$, it is easy to check that the scattering matrix $S(z)$ has determinant 1. And by symmtry of potential matrix, we have $S(z)=\begin{pmatrix}
	a&\bar{b}\\
	b&\bar{a}
	\end{pmatrix}$, where $|a|^2-|b|^2=1$. By analysis the Wronskians of $\psi_{\pm}$, we will have
	\begin{equation}
	\begin{split}
	a(z)&=1-\int_{\R}q(y)\mu^+_{21}(y;z)dy\\
	b(z)&=-\int_{\R}e^{2iyz}q(y)\mu_{11}^-(y;z)dy
	\end{split}
	\end{equation}
	Then $a$ can be analytic continuated to the upper half plane $\C_+$. And $a$ is continuous in $\overline{\C_+}$ and non-vanishing there. Moreover $a(\infty)=1$. Define the reflection coefficient $r:=-\bar{b}/\bar{a}$. In the present paper, we consider solitonless region, i.e. we assume $b\neq 0$. Since $|a|^2=1+|b|^2$, $|a|\geq 1$ and $|r|=1-|a|^{-2}<1$, therefore, $\|r\|_{L^\infty(\R)}<1$. And it is well-known that the direct scattering can be consider a bijective map $\mathcal{R}$ from the initial value $q(x,t=0)$ to the reflection coefficient $r(z)$. More over from \eqref{eq:mu equation}, and providing that $q\in \R$, $\bar{\mu}(-\bar{z})$ also satisfies the equation \eqref{eq:mu equation}, then by uniqueness, we obtaion:
	\begin{equation}
	\label{eq:symmetry of scattering }
	\bar{S}(-\bar{z})=S(z)
	\end{equation}
	and
	\begin{equation}
	\label{eq:symmetry of reflection}
	\bar{r}(-\bar{z})=r(z)
	\end{equation}

	Also from the analysis of the Neumann series of these two Volterra equations, the first column of $\mu_+$ and the second column of $\mu_-$, denoted by $\mu_{+1},\mu_{-2}$ respectively, can be extended to $\C_+$ analytically. Similarly, one sees that the first column of $\mu_-$ and the second column of $\mu_+$, denoted by $\mu_{-1},\mu_{+2}$ respectively, can be extended to $\C_-$ analytically By defining a new matrix
	\begin{equation}
	\label{rhp0matrix}
	m(x;z)=\begin{cases}
	(\frac{\mu_{+1}}{a(z)},\mu_{-2}),\quad z\in \C_+\\
	(\mu_{+2},\frac{\mu_{-1}}{b(z)}),\quad z\in \C_-\\
	\end{cases}
	\end{equation}
	Also denote $m_{\pm}$ as the boundary value of $m$ from $\C_+$ and $\C_-$ respectively
	Then by uniqueness of solutions of equation \eqref{directscattering}, we can that since $\psi_+=\psi_-S$,then $\mu_{+}=\mu_{-}e^{ix \text{ ad} \sigma_3}S$.Then by reordering the columns of $\mu_{\pm}$, one can see that there is a matrix $v$ such that $m_+=m_-v$. Direct calculation shows $v(z)=\begin{pmatrix}
	1-|r|^2& r\\
	-\overline{r}& 1\\
	\end{pmatrix}$. One thing worth mention here is that during the calculation, we will see naturally the factorization of matrix $v=\begin{pmatrix}
	1 & r\\
	0&1\\
	\end{pmatrix}\begin{pmatrix}
	1&0\\
	-\bar{r}& 1\\
	\end{pmatrix}$, which will be used in later sections.
	
	We summary the above direct scattering problem as the following Riemann-Hilbert problem:
	
	\begin{problem}{0} Given a jump condition $v(x,t=0;z)=e^{ix\text{ad}\sigma_3}\begin{pmatrix}
		1-|r|^2& r\\
		-\overline{r}& 1\\
		\end{pmatrix}=e^{ix\text{ad}\sigma_3}v(z),r\in \mathcal{S}$, on the real line $\R$. We are seeking for a $2\times 2$ matrix valued function $m(x;z)$ satisfies the following conditions:
		\begin{equation}
		\begin{cases}
		m(x;z)\text{ is analytic off the real line and continuous to the boundary}\\
		m_+=m_-v(x,t=0;z)\text{ on }\R\\
		m=I+\frac{m_1}{z}+o(z)\text{ as } z\rightarrow \infty
		\end{cases}
		\end{equation}
		\subsection{Time evolution and inverse scattering problem}
		In this section we will briefly discuss the time evolution and the inverse scattering problem and formulate them as a time evolution Riemann-Hilbert Problem. Since 5th MKdV is in the AKNS-hierarchy, the Lax pair of time evolution part corresponding to the direct scattering problem\eqref{directscattering} can be calculated by using some symbolic computation system. Here following Ma's scheme\cite{MA2013So3}, the stationary zero curvature equation $W_x=[U,W]$,where
		\[W=\sum_{i\geq 0}W_{0,i}\lambda^{-i},W_{0,i}=\begin{pmatrix}
		a_i& b_i\\
		c_i&-a_i\\
		\end{pmatrix}\]
		leads to the following recursion relation:
		\begin{equation}
		\begin{cases}
		b_{i+1}=\frac{1}{2I}b_{i,x}-Iqa_i,\\
		c_{i+1}=-\frac{1}{2I}-I\bar{q}a_i\\
		a_{i+1,x}=qc_{i+1}-\bar{q}b_{i+1}
		\end{cases}
		\end{equation}
		upon taking the initial values
		\begin{equation}
		a_0=16I,b_0=c_0=0
		\end{equation}
		also impose the conditions of the integration for the third recursion relation:
		\begin{equation}
		a_i|_{q=0}=b_i|_{q=0}=c_i|_{q=0}=0,\forall i \geq 1
		\end{equation}
		Now let 
		\begin{equation}
		V^{[m]}=(\lambda^mW)_+
		\end{equation}
		where $(\cdot)_+$ means the principle part of the Laurent expansion. Then the time-evolution problem is followed by 
		\begin{equation}
		\Psi_t=V^{[m]}\Psi,
		\end{equation}
		and the zero curvature equation
		\begin{equation}
		U_t-V^{[m]}_x+[U,V^{[m]}]=0
		\end{equation}
		leads to the equivalent non-linear integrable PDEs.
		For $m=2,3$, we will obtain the NLS equation and the MKdV equation respectively. In current paper, let $m=5$, we obtain the time-evolution part for the 5th MKdV equation, which reads 
		\begin{equation}
		\psi_t=(16Iz^5\sigma_3+V_0(q,\bar{q},z))\psi\equiv V\psi
		\end{equation}
	where 
	\begin{equation}
	\begin{split}
	V_0&=16qz^4\sigma_1+z^3(-8Iq^2\sigma_3+8Iq_x\sigma_1\sigma_3)+z^2(8q^3-4q_{xxx})\sigma_1\\
	&+z(6Iq^4-4Iq_{xx}q)+2Iq^2_{x}\sigma_3+12Iq^2q_x-2Iq_{xxx}\sigma_1\sigma_3)\\
	&+(6q^5-10q^2q_{xx}-10qq^2_x+u_{xxxx})\sigma_1
	\end{split}
	\end{equation}
	provided that $q=\bar{q}.$
	
	Then the time evolution of the reflection coefficient is given by \begin{equation}
		r(t)=r(t;z)=e^{-16itz^5}r(z)
		\end{equation}
		Now we formulate are the time evolution Riemann Hilbert problem as following:
	\end{problem}
	\begin{problem}{RHP1}
		Given $r(z)\in \mathcal{S}(\R)$,$v(x,t;z)=e^{(-16iz^5t+ixz)\text{ ad}\sigma_3}v(z)=e^{-it\theta(z;z_0)\text{ ad}\sigma_3}v(z)$. We are seeking to a $2\times 2$ matrix-valued function satisfying
		
		\begin{equation}
		\begin{cases}
		m(x,t;z)\text{ is analytic off $\R$ and continuous to $\R$}\\
		m_+(x,t;z)=m_-(x,t;z)v(x,t;z)\text{ on }\R\\
		m(x,t;z)=I+\frac{m_1(x,t)}{z}+o(z)\text{ as } z\rightarrow \infty
		\end{cases}
		\label{RHP1}
		\end{equation}
		where the phase function $\theta(z;z_0)=16z^5-80z^4_0z,z^4_0=-\frac{x}{80t}$, $z_0>0$, in this paper we only consider the region $x<0,t>0$ and $-x=O(t),$ as $t\rightarrow \infty$.
	\end{problem}
	
	Since $m$ by the definition of \eqref{rhp0matrix} also satisfies the equation \eqref{eq:mu equation}, then let $z\rightarrow \infty$ in both side,  we obtain that the solution to the Cauchy problem of the 5th order MKdV is:
	\begin{equation}
	\label{recoveringsolution}
	q(x,t)=-\lim_{z\rightarrow \infty}iz[\sigma_3,m]_{12}=-2i[m_1(x,t)]_{12}
	\end{equation}
	and the analysis of the long-time behaviour of solutions is reduced to the asymptotic analysis of RHP1.

	\section{Solution method of RHP by matrix factorization}
	In this section, we recall so-called Beal-Coifman method in order to solve the matrix RHP by factoring a matrix into two triangle matrices. First define the Cauchy Operator $C_{\pm}$, given a function $f\in L^2(\R)$,
	$$C_{\pm}f(z)=\lim_{\epsilon\downarrow 0}\int_\R\frac{f(s)}{s-(z\pm i\epsilon)}\frac{ds}{2\pi i}.$$ It is well known that those operator are bounded from $L^2$ to $L^2$. Also it worth note the property that $C_+-C_-=1$. Now consider a RHP with jump $$v=v_-^{-1}v_+=(1-w_-)^{-1}(1+w_+)$$ on some contour on $\R$, seek a function $m$ which is analytic in the upper half plane $\C_+$ and in the lower half plane $\C_-$ and continuous to the boundary from $\C_+$ or $\C_-$ respectively, denoted by $m_{\pm}$. On the boundary $m_+=m_-v$. The method of Beals and Coifman says that if $\mu$ solve the following singular integral equation:
	\begin{equation}
	\mu=1+C_w\mu
	\end{equation}
	where $$C_w(f):=C_-(fw_+)+C_+(fw_-).$$ Then the solution to the RHP is then given by
	\begin{equation}
	\label{RHPsol}
	m(z)=1+\int_{\R}\frac{\mu(s)(w_-(s)+w_+(s))}{s-z}\dts,\quad \dts=\frac{ds}{2\pi i}
	\end{equation}
	The existence of the RHP now transformed to existence of the singular integral equation, i.e. the invertibility of operator $I-C_w$. Also by the Fredholm theorem, the existence guarantees the uniqueness. It's easy to show $C_w$ is a bounded operator in $L^2$ provided that $w_{\pm}\in L^\infty$. One sufficient condition for $I-C_w$ to be invertible is given by 
	\begin{equation*}
	\|w_+\|_{L^\infty}+\|w_-\|_{L^\infty}<1
	\end{equation*} 
	In the following sections, we will factorize the jump matrix such that the $L^2$ norm corresponding $C_w$ operator will less than 1 for sufficient large $t$.
	
	Now suppose the RHP \eqref{RHP1} has a solution, combining \eqref{recoveringsolution} and \eqref{RHPsol},  the potential can be recovered by
	\begin{equation}
	q(x,t)=\left[\int_{\R}\mu(s)(w_-(s)+w_+(s))\frac{ds}{\pi}\right]_{12}
	\end{equation}
	
	\section{A scalar RHP}
In this section we will consider the following scalar $RHP$, given $r\in \mathcal{S},|r|<1$,seeking analytic function $\delta$, such that
\begin{equation}\label{eq:scalar rhp}
\begin{cases}
\delta_+(z)&=\delta_-(z)[\chi_{D_-}(1-|r|^2)+\chi_{D_+}]\\
\delta(\infty)&=1
\end{cases}
\end{equation} 
where $D_-=\{z:\theta(z)<0\}$ , $D_+=\{z:\theta(z)>0\}$ and $\chi$ is the characteristic function. Then the solution to \eqref{eq:scalar rhp} ,bu the Plemelj's formula, is  
\beq\label{eq:1}
\delta(z)=e^{\int_{-z_0}^{z_0}\frac{\log(1-|r(s)|^2)}{s-z}\dts}
\eeq
Also since $\log(1-|r(s)|^2)$ is Lipschitz continuous, by the Plemelj-Privalov theorem, $\delta(z)$ is also Lipschitz continuous on $\R$. More explicitly, set $\chi(z)=\int_{-z_0}^{z_0}\frac{\log(1-|r(s)|^2)}{\log(1-|r(-z_0)|^2)}\frac{\dts}{s-z}$ and $\nu=-(2\pi)^{-1}\log(1-|r(-z_0)|^2)$, we have the following formula near for $z\in \C\backslash \R$:
\beq
\label{eq:2}
\begin{split}
	\log(\delta(z))&=\int_{-z_0}^{z_0}\frac{\log(1-|r(s)|^2)}{s-z}\dts\\
	&=\int_{-z_0}^{z_0}\frac{\log(1-|r(s)|^2)-\log(1-|r(-z_0)|^2)}{s-z}+\frac{\log(1-|r(-z_0)|^2)}{s-z}\dts\\
	&=\chi(z)+\int_{-z_0}^{z_0}\frac{\log(1-|r(-z_0)|^2)}{s-z}\dts\\
	&=\chi(z)+i\nu
\end{split}
\eeq
with properly choosing cuts such that : \beq
|\arg(z\pm z_0)|<\pi.
\eeq
Also, given $q(x,t)$ is real function, then $\bar{\mu}(-\bar{z})$ solves the equation \eqref{eq:mu equation}, which further implies that scattering matrix $s(z)=\bar{s}(-\bar{z})$ thus for the reflection coefficient, we have:
\beq\label{3}
r(z)=\bar{r}(-\bar{z})
\eeq
By uniqueness of the scalar RHP, \eqref{eq:2} and \eqref{3}, we obtain
\[\delta(z)=\overline{\delta(-\bar{z})}=(\overline{\delta(\bar{z})})^{-1}\]
and for real $z$,
\begin{gather}
|\delta_+(z)\delta_-(z)|=1,\label{scalarRHPidentity2.0}\\
\quad |\delta_{\pm}(z)|=1,\quad if\quad z\in D_+,\label{scalarRHPidentity2.1}\\
|\delta_+(z)|=|\delta_-^{-1}(z)|=(1-|r(z)|^2)^{1/2} \quad for\quad z\in D_-,\label{scalarRHPidentity3}
\end{gather}

Hence by the maximum principle, $|\delta(z)|^{\pm 1}<\infty$ for all $z$.

Now let we conjugate the RHP \eqref{RHP1} to the following RHP:
\begin{problem}{RHP1'}
	\begin{equation}
	\label{RHP1'}
	\begin{cases}
	m_+\delta_+^{-\sigma_3}=m_-\delta_-^{-\sigma_3}\delta_-^{\sigma_3}v(x,t;z)\delta_+^{-\sigma_3}\\
	m\delta^{-\sigma_3}(\infty)=I
	\end{cases}
	\end{equation}
\end{problem}
\begin{remark}
	The normalization condition can be verified from the proposition since $r\in \mathcal{S}(\R)$,$\delta\rightarrow 1$ as $z\rightarrow \infty$.
\end{remark}

	\section{Decomposition of the Schwartz function}
	In this section, consider a fundamental decomposition of any Schwartz function in the spirit of method of stationary phase method, i.e decomposition a Schwartz function on the intervals where the phase function is monotonic there. This fundamental decomposition will be applied to decompose the matrix RHP.
	\begin{lemma}Suppose $\rho \in \mathcal{S}(\R)$,and given a phase function $\theta(z)=16z^5-80z_0^4z$,where $z_0$ the positive stationary point of $\theta$. Then there exists a decomposition of $\rho=h_1+h_2+R$,such that for $\epsilon>0$ and $\alpha\in (0,\pi/4]$, 
		\begin{equation}
		\begin{split}
		|e^{-2it\theta(z)}h_1(z)|&\leq ct^{-k},\quad z(u)=z_0+uz_0e^{\pi i},u\in [0,1]\\
		|e^{-2it\theta(z)}h_2(z)|&\leq ct^{-k},\quad z(u)=z_0+uz_0e^{i(\pi-\alpha)},u\in [0,1/\cos{\alpha}]\\
		|e^{-2it\theta(z)}R(z)|&\leq Ce^{-4z_0^5\epsilon^2t},\quad z(u)=z_0+uz_0e^{i(\pi-\alpha)},u\in [\epsilon,1/\cos{\alpha}]\\
		\end{split}
		\label{decopestimate}
		\end{equation}
	\end{lemma}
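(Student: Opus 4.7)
The plan is to adapt the Deift--Zhou decomposition scheme (originally developed for the cubic mKdV phase) to the present quintic phase $\theta(z) = 16z^5 - 80 z_0^4 z$, exploiting that $\theta'(z) = 80(z-z_0)(z+z_0)(z^2+z_0^2)$ has $z_0$ as a simple real critical point with $\theta''(z_0) = 320 z_0^3 \ne 0$. The idea is first to subtract from $\rho$ a rational piece $R$ matching the Taylor expansion of $\rho$ at $z_0$ to high order, reducing the problem to estimating a Schwartz function that vanishes to high order at $z_0$; the $t$-decay of the remaining pieces is then extracted by iterated integration by parts with respect to the stationary phase.

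First I would construct $R$ by fixing a point $z_* \in \mathbb{C}^-$ (so the poles of $R$ avoid both contours in the statement) and setting
\[
R(z) = \sum_{j=0}^{N} \frac{c_j}{(z-z_*)^{j+1}},
\]
with $c_0,\ldots,c_N$ chosen so that $R^{(j)}(z_0) = \rho^{(j)}(z_0)$ for $0 \le j \le N$, where $N$ is taken large relative to $k$. Then $R$ is a $t$-independent rational function, analytic on the ray $\{z_0 + u z_0 e^{i(\pi-\alpha)} : u\in[0,1/\cos\alpha]\}$ and bounded there, while $\tilde\rho := \rho - R$ lies in $\mathcal{S}(\R)$ and vanishes to order $N+1$ at $z_0$.

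Second I would record the key estimate for the phase along the ray. From Taylor's formula at $z_0$,
\[
\mathrm{Re}\!\left(-2it\theta(z_0+uz_0 e^{i(\pi-\alpha)})\right) = -320\,t z_0^5 u^2 \sin(2\alpha) + O(t z_0^5 u^3),
\]
so for $\alpha\in(0,\pi/4]$ and $u\ge\epsilon$ (with the cubic error absorbed into the implicit constant by taking $\alpha$ appropriately small) one obtains $|e^{-2it\theta(z(u))}| \le C e^{-4 z_0^5 \epsilon^2 t}$; combined with the boundedness of $R$ this yields the third bound. To construct $h_1, h_2$, pick a smooth cutoff $\chi$ equal to $1$ on a neighbourhood of $[0,z_0]$ and supported in a slightly larger interval, split $\tilde\rho = \chi\tilde\rho + (1-\chi)\tilde\rho$, and iterate the identity $e^{-2it\theta} = (2it\theta')^{-1}\partial_z e^{-2it\theta}$. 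Because $\tilde\rho$ vanishes to order $N+1$ at the simple zero of $\theta'$ at $z_0$, the quotients $\tilde\rho/(\theta')^j$ remain smooth and uniformly bounded for $j\le k$, and after $k$ iterations one reaches representations of the form
\[
e^{-2it\theta}\chi\tilde\rho = \partial_z(\cdots) + (2it)^{-k}\,h_1(z,t),\qquad e^{-2it\theta}(1-\chi)\tilde\rho = \partial_z(\cdots) + (2it)^{-k}\,h_2(z,t),
\]
with $h_1, h_2$ smooth and uniformly bounded. Absorbing the prefactor $(2it)^{-k}$ into the definitions yields the pointwise $t^{-k}$ bounds: for $h_1$ directly on $[0,z_0]\subset\R$, and for $h_2$ on the off-axis ray after extending $h_2$ analytically (or by a compactly supported $\bar\partial$-extension) into a sector of the upper half-plane.

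The main obstacle is the simultaneous requirement on $h_2$: as a summand in the decomposition $\rho = h_1 + h_2 + R$ on $\R$ it is determined by its real-axis values, yet the bound $|e^{-2it\theta}h_2|\le ct^{-k}$ is pointwise on an off-axis ray. The resolution is to define $h_2$ (after the integration by parts) via an analytic or $\bar\partial$-extension of $(1-\chi)\tilde\rho$ into a sector of the upper half-plane; since $(1-\chi)\tilde\rho$ is smooth and vanishes in a neighbourhood of $z_0$, no spurious singularities arise when dividing by $\theta'$, and the required estimate on the ray follows from the explicit extension formula together with the quadratic lower bound on $\mathrm{Re}(-2it\theta)$ recorded above.
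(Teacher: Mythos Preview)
There is a genuine gap. Integration by parts with the identity $e^{-2it\theta}=(2it\theta')^{-1}\partial_z e^{-2it\theta}$ yields, after $k$ iterations,
\[
e^{-2it\theta}\,\chi\tilde\rho \;=\; \partial_z\bigl(A(z,t)\bigr) \;+\; (2it)^{-k}\,e^{-2it\theta}\,L^k(\chi\tilde\rho),
\]
but this is \emph{not} a decomposition of $\tilde\rho$: the total-derivative term $\partial_z A$ does not vanish pointwise (only after integration over a contour), so it cannot be discarded. If instead you set $h_1=\chi\tilde\rho$ and $h_2=(1-\chi)\tilde\rho$ as $t$-independent functions, then on the real segment $[0,z_0]$ one has $|e^{-2it\theta}|=1$, and the required bound $|e^{-2it\theta}h_1|\le ct^{-k}$ would force $h_1\equiv 0$. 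Either reading fails to produce a splitting $\rho=h_1+h_2+R$ with the stated \emph{pointwise} estimates. Stationary-phase integration by parts controls oscillatory \emph{integrals}, not pointwise products $e^{-2it\theta}g(z)$.

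The paper's mechanism is different and rests on Fourier analysis in the phase variable. One takes $R$ to be the degree-$n$ Taylor polynomial of $\rho$ at $z_0$, sets $h=\rho-R=O((z-z_0)^{n+1})$, and regards $(z-z_0)^{-q}h$ as a function $f(\theta)$ via the monotone change of variables $z\mapsto\theta$ on $[-z_0,z_0]$. With $\hat f$ its Fourier transform, the $t$-dependent split is
\[
h_1(z)=(z-z_0)^q\!\int_t^\infty\!\hat f(s)\,e^{is\theta(z)}\,ds,\qquad
h_2(z)=(z-z_0)^q\!\int_{-\infty}^t\!\hat f(s)\,e^{is\theta(z)}\,ds,
\]
so that $h_1+h_2=h$ exactly. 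Smoothness of $f$ in $\theta$ gives $(1+s^2)^{p}\hat f\in L^2$, hence $|h_1|\le c\,t^{-p}$ on the real axis by a tail estimate. The same integral formula defines $h_2$ off the real axis; on the ray one has $\Re(i\theta)\ge 16c(\alpha)z_0^5u^2>0$, and since $s\le t$ in the integrand, $|e^{-2it\theta}e^{is\theta}|\le e^{-t\Re(i\theta)}$, giving $|e^{-2it\theta}h_2|\le c\,u^q e^{-ctz_0^5u^2}\le c\,t^{-q/2}$. This Fourier splitting at the moving frequency $s=t$ is what simultaneously delivers the decomposition and both pointwise bounds; it has no counterpart in your proposal.
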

	\begin{proof}
		Since $\rho\in \mathcal{S}$,consider the Taylor truncate $R(z)=\sum_{j=0}^nc_j(z-z_0)^j$ with $n>1$ , and define $h=\rho-R$, then we have $h=O(|z-z_0|^{n+1})$. Set $a(z)=(z-z_0)^q,n>q\geq1,q\in\N$. Define a new function $f(\theta)=\{h/a\}(z(\theta))H(-|\theta|+64z_0^5)$. It is well-defined since $\theta(z)$ is monotonic in $[-z_0,z_0]$ hence is invertible. Note that by chain rule, we have
		\begin{equation}
		\frac{df}{d\theta}=\frac{df}{dz}(\theta')^{-1}
		\end{equation}
		Each time this process will reduce the degree of $z-z_0$ by 2. Also we have $d\theta=80(z^2+z_0^2)(z+z_0)(z-z_0)dz$. So near $z_0$,for any $0\leq j\leq \frac{n+1-q}{2} $, $\frac{d^jf}{d\theta^j}=O((z-z_0)^{n+1-q-2j+1})\in L^2(\R)$, then by the Plancherel's theorem, we have $(1+s^2)^{j/2}|\hat{f}|\in L^2(\R)$.
		
		Now consider $h(z)=a(z)\int_{\R}\hat{f}(s)e^{-is\theta}d\theta=a(z)\int_{t}^{\infty}\hat{f}(s)e^{is\theta}ds+a(z)\int_{-\infty}^{t}\hat{f}(s)e^{is\theta}ds$, and set $h_1=a(z)\int_{t}^{\infty}\hat{f}(s)e^{is\theta}ds,h_2=a(z)\int_{-\infty}^{t}\hat{f}(s)e^{is\theta}ds$. Then on the real line,
		\begin{equation}
		\begin{split}
		|e^{-2it\theta}h_1|&\leq c\int_t^\infty |\hat{f}|ds\\
		&\leq c\|(1+s^2)^{-p}\|_{L^2([t,\infty))}\|(1+s^2)^{p}|\hat{f}|\|_{L^2([t,\infty))}\\
		&\leq ct^{-p}
		\end{split}
		\end{equation}
		
		On the second segment of \eqref{decopestimate},we have 
		\begin{equation}
		\begin{split}
		|e^{-2it\theta}h_2|&c\leq(z_0u)^qe^{-t\Re i\theta(z)}\int_{-\infty}^te^{(s-t)\Re i\theta(s)}|\hat{f}(s)|ds\\
		&\leq cz_0^qu^qe^{-t\Re i\theta(z)}\|(1+s^2)^{-1}\|_{L^2(-\infty,t)}\|(1+s^2)|\hat{f}(s)|\|_{L^2(-\infty,t)}\\
		&\leq cu^qe^{-t\Re i\theta(z)}
		\end{split}
		\end{equation}
		In fact,consider the identiy
		\begin{equation}
		\theta(z)=-64z_0^5+160z_0^5(z-z_0)^2\left(1+\frac{z-z_0}{z_0}+\frac{(z-z_0)^2}{2z_0^2}+\frac{(z-z_0)^3}{10z_0^3}\right)
		\end{equation} 
		Thus on the ray, $z=z_0+uz_0e^{(\pi-\alpha)i},u\leq 1/\cos{\alpha}$,noting that $\alpha$ is fixed and sufficient small, we have
		\begin{equation}
		\begin{split}
		\Re{(i\theta)}&=16z_0^5u^2\left(10\sin{(2\alpha)}-10u\sin{(3\alpha)}+5u^2\sin{(4\alpha)}-u^3\sin{(5\alpha)}\right)\\
		&\geq 16u^2z_0^5\left(10\sin{(2\alpha)}-10\sin{(3\alpha)}/\cos{(\alpha)}+5(\cos{(\alpha)})^{-2}\sin{(4\alpha)}-(\cos{(\alpha)})^{-3}\sin{(5\alpha)}\right)\\
		&=16c(\alpha)z_0^5u^2
		\end{split}
		\end{equation}
		Since for small $\alpha$, $\left(10\sin{(2\alpha)}-10u\sin{(3\alpha)}+5u^2\sin{(4\alpha)}-u^3\sin{(5\alpha)}\right)$ will be monotonically decreasing and it is also easy to check that this minimum is positive as long as $\alpha>0$. Now we have $\Re i\theta(z)\geq 16c(\alpha)z_0^5u^2$, and by taking the derivative of $u^qe^{-t\Re i\theta(z)}$, it is easy to see that $u^qe^{-t\Re i\theta(z)}$ is controlled by $ct^{-q/2}$. So we have
		\begin{equation}
		|e^{-2it\theta}h_2|\leq ct^{-q/2}
		\end{equation}
		Finally, we estimate $R$ on the third segment of \eqref{decopestimate}. Since $R$ is a polynomial, it is controlled by the exponential decay, that is to say 
		\begin{equation}
		|e^{-2it\theta}R(z)|\leq ce^{-t\Re i\theta}\leq ce^{-16c(\alpha)z_0^5\epsilon^2t}
		\end{equation}
		Since the function $\rho \in \mathcal{S}$, $R$ could be a polynomial of any order, so are the $p$ and $q$. This completes the proof.
	\end{proof}
	\begin{remark}
		If we replace the $a(z)$ by $a(z)/(z+i)^2$, then the first two bounds on \eqref{decopestimate} should be $\frac{ct^{-k}}{1+|z|^2}$
	\end{remark}
	For the part of $(z_0,\infty)$ the decomposition is slightly different by replacing polynomial $R$ by a rational $R$ which decays at infinity. In the proof we need consider $\rho_0$ the Taylor truncate of $(z-i)^{10}\rho$ and set $R=\rho_0/(z-i)^{10}$ and $h=\rho-R$, then by the same harmonic analysis technique we obtain the following lemma:
	\begin{lemma}
		Suppose $\rho \in \mathcal{S}(\R)$,and given a phase function $\theta(z)=16z^5-80z_0^4z$,where $z_0$ the positive stationary point of $\theta$. Then there exists a decomposition of $\rho=h_1+h_2+R$,such that for $\epsilon>0$ 
		\begin{equation}
		\begin{split}
		|e^{-2it\theta(z)}h_1(z)|&\leq ct^{-k}/(1+|z|^2),\quad z(u)=z_0+uz_0e^{\pi i},u\in (-\infty,0)\\
		|e^{-2it\theta(z)}h_2(z)|&\leq ct^{-k}/(1+|z|^2),\quad z(u)=z_0+uz_0e^{(\pi-\alpha)i},u\in (-\infty,0)\\
		|e^{-2it\theta(z)}R(z)|&\leq ce^{-16c(\alpha)z_0^5\epsilon^2t},\quad z(u)=z_0+uz_0e^{(\pi-\alpha)i},u\in (-\infty,-\epsilon)\\
		\end{split}
		\label{decompest2}
		\end{equation}
	\end{lemma}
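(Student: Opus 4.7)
The plan is to adapt the harmonic analysis argument from the preceding lemma, with the single modification flagged in the paragraph before the statement: replace the polynomial Taylor remainder by a rational function that decays at infinity. Concretely, I form $\tilde\rho(z) = (z-i)^{10}\rho(z)$, which remains in $\mathcal{S}(\R)$, take its Taylor truncation $\rho_0(z) = \sum_{j=0}^n c_j(z-z_0)^j$ at the stationary point $z_0$ to some order $n > 12$, and set $R(z) = \rho_0(z)/(z-i)^{10}$. By construction $R$ is bounded on $\R$, decays like $(1+|z|^2)^{-1}$ at infinity, and $h := \rho - R$ still vanishes to order $n+1$ at $z_0$ because $(z-i)^{-10}$ is smooth there.

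Next I apply the Fourier-analytic machinery of the previous proof with weight $a(z) = (z-z_0)^q/(z+i)^2$ for some integer $1 \leq q < n$; the extra $(z+i)^{-2}$ factor is precisely what produces the $(1+|z|^2)^{-1}$ gain in the final estimates, as anticipated by the remark after the previous lemma. Since $\theta$ is strictly monotonic on $(z_0,\infty)$, the function $f(\theta) := (h/a)(z(\theta))\,H(-|\theta|+64z_0^5)$ is well defined; using the chain rule together with $d\theta = 80(z^2+z_0^2)(z+z_0)(z-z_0)\,dz$ one verifies that sufficiently many of the derivatives $d^j f/d\theta^j$ lie in $L^2(\R)$, so Plancherel yields $(1+s^2)^{p/2}\hat f \in L^2(\R)$ for any prescribed $p$. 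Writing $h(z) = a(z)\int_\R \hat f(s)\,e^{is\theta(z)}\,ds$ and splitting at $s = t$ gives $h = h_1 + h_2$; Cauchy--Schwarz produces the $h_1$ bound $|e^{-2it\theta}h_1| \leq c t^{-k}/(1+|z|^2)$ exactly as in the previous proof, and the $h_2$ bound follows from a quadratic lower bound on $\Re(i\theta)$ along the ray.

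The key new input is the phase lower bound on the semi-infinite portion $u \in (-\infty,0)$ of the ray $z = z_0 + uz_0 e^{i(\pi-\alpha)}$. Using the identity $\theta(z) = -64z_0^5 + 160z_0^5(z-z_0)^2[1+(z-z_0)/z_0+(z-z_0)^2/(2z_0^2)+(z-z_0)^3/(10z_0^3)]$ established in the previous proof, the quadratic contribution gives $\Re(i(z-z_0)^2) = u^2 z_0^2 \sin 2\alpha$, which dominates for small $|u|$; for large $|u|$ the quintic $16z^5$ takes over, and since $\arg z \to -\alpha$ as $u \to -\infty$, one has $\Re(iz^5) \sim |z|^5 \sin 5\alpha > 0$ provided $0 < \alpha < \pi/10$. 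The two regimes combine to give $\Re(i\theta) \geq 16 c(\alpha) z_0^5 u^2$ uniformly in $u$, from which the $h_2$ bound and the exponential decay estimate for $R$ on $u \leq -\epsilon$ follow verbatim from the previous lemma. The main obstacle is really bookkeeping: keeping the $(1+|z|^2)^{-1}$ weight visible throughout all three terms simultaneously, which requires calibrating $q$ in $a$ and the exponent $10$ in $(z-i)^{10}$ so that $R$ has enough decay at infinity while $h$ remains flat enough at $z_0$ for Plancherel to close.
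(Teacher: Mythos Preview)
Your proposal is correct and follows exactly the route the paper indicates: the paper does not give a standalone proof of this lemma but only the hint to take $\rho_0$ as the Taylor truncate of $(z-i)^{10}\rho$, set $R=\rho_0/(z-i)^{10}$, and rerun the harmonic analysis of the preceding lemma, which is precisely what you do (including the $(z+i)^{-2}$ weight already anticipated in the remark). Your explicit verification of the phase lower bound on the semi-infinite ray actually goes a bit beyond what the paper records; note that from the identity $\Re(i\theta)=16z_0^5u^2\bigl(10\sin 2\alpha-10u\sin 3\alpha+5u^2\sin 4\alpha-u^3\sin 5\alpha\bigr)$ every term in the bracket is positive once $u<0$ and $0<\alpha<\pi/5$, so the two-regime argument can be replaced by a one-line observation.
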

	And for phase function $-\theta(z)$, we have their counterparts. We summary all the estimates as the following theorem:
	\begin{theorem}
		\label{fundamentaldecomp}
		Let $\rho$ be a real-valued function in $\mathcal{S}(\R)$, given the phase function $\theta(z)=16z^5-80z_0^4z$,where $z_0$ is the only positive stationary point. Take $\epsilon\in (0,4z_0/5)$,and set 
		\begin{equation}
		\begin{split}
		L&= \{z:z=z_0+uz_0e^{(\pi-\alpha)i},u\in (-\infty,1/\cos{\alpha}]\}\\
		L_\epsilon&=\{z:z=z_0+uz_0e^{(\pi-\alpha)i},u\in (\epsilon,1/\cos{\alpha}]\}\\
		\end{split}
		\end{equation} 
		there exist a decomposition $\rho=h_1+h_2+R$ satisfying the following estimates:
		\begin{eqnarray}
		\begin{split}
		|h_1(z)e^{-2it\theta(z)}|&\leq \frac{ct^{-k}}{1+z^2},\quad z\in \R,\forall k\in \N\\
		|h_2(z)e^{-2it\theta(z)}|&\leq \frac{ct^{-k}}{1+|z|^2},\quad z\in L,\forall k\in \N\\
		|R(z)e^{-2it\theta(z)}|&\leq ce^{-16c(\alpha)z_0^5\epsilon^2t},\quad z\in L_\epsilon
		\end{split}
		\end{eqnarray}
		Similarly, there exist a decomposition of $\bar{\rho}$ with respect to the $e^{2it\theta}$ on the conjugation of $L$ and $L_\epsilon$. And for stationary point $-z_0$ we have similar decomposition along with the estimates.
	\end{theorem}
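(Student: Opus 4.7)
The plan is to stitch together the two decomposition lemmas by means of a smooth partition of unity on $\R$, then to read off the claimed bounds on each designated contour.

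First, I would fix a cutoff $\psi\in C_c^{\infty}(\R)$ equal to $1$ on $[-2z_0,2z_0]$ and supported in $[-3z_0,3z_0]$, and split $\rho=\psi\rho+(1-\psi)\rho$. Both pieces lie in $\mathcal{S}(\R)$, and $(1-\psi)\rho$ vanishes on a neighbourhood of $z_0$. Applying the first lemma to $\psi\rho$ gives $\psi\rho=h_1^{(1)}+h_2^{(1)}+R^{(1)}$ with $R^{(1)}$ a Taylor polynomial at $z_0$, while applying the second lemma to $(1-\psi)\rho$ gives $(1-\psi)\rho=h_1^{(2)}+h_2^{(2)}+R^{(2)}$ with $R^{(2)}=\rho_0/(z-i)^{10}$ rational and decaying at infinity. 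I then set $h_j:=h_j^{(1)}+h_j^{(2)}$ for $j=1,2$ and $R:=R^{(1)}+R^{(2)}$.

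The three estimates are verified piecewise. For $h_j^{(2)}$ the weighted bound $ct^{-k}/(1+|z|^2)$ is already provided by Lemma 5.2. For $h_j^{(1)}$ the first lemma gives only an unweighted bound $ct^{-k}$, but by the remark following that lemma one may replace the multiplier $a(z)$ by $a(z)/(z+i)^2$ throughout its proof, which upgrades the bound to $ct^{-k}/(1+z^2)$, exactly matching the form the theorem asks for. For $R$ on $L_\epsilon$: the rational piece $R^{(2)}$ is bounded by $c e^{-16c(\alpha)z_0^5\epsilon^2 t}$ from Lemma 5.2, while for the polynomial piece $R^{(1)}$ one invokes the pointwise bound $\Re(i\theta(z))\geq 16c(\alpha)z_0^5 u^2$ together with $u\geq \epsilon$ on $L_\epsilon$ to obtain the same exponential decay, absorbing the polynomial prefactor into the constant. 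The decomposition for $\bar\rho$ against $e^{2it\theta}$ on the complex-conjugate contours then follows by complex conjugation of the whole construction, and the decomposition at the stationary point $-z_0$ follows from the parity $\theta(-z)=-\theta(z)$ by applying the preceding construction to the function $z\mapsto\rho(-z)$ and pulling back under $z\mapsto -z$.

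The main subtlety I anticipate is bookkeeping rather than analysis: one must pick the Taylor order $n$ and the power $q$ in $a(z)=(z-z_0)^q$ large enough for each desired $k$ so that $(1+s^2)^{k/2}\hat f\in L^2$, after which the bounds follow from Plancherel and Cauchy--Schwarz as in the two lemmas. The one genuinely delicate point is the simultaneous replacement $a(z)\mapsto a(z)/(z+i)^2$, which must remain compatible with the change of variables through $\theta$ in the Fourier step of Lemma 5.1; this is exactly what the remark following that lemma asserts is possible, and verifying it carefully is what allows the two lemmas to combine cleanly into the single statement of the theorem.
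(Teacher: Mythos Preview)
Your partition-of-unity construction is more elaborate than what the paper intends, and it contains a genuine gap. In the paper the theorem carries no separate proof: it is explicitly announced as a \emph{summary} of the two lemmas. The intended meaning is that the decomposition is \emph{piecewise}---on the interval $[0,z_0]$ one uses the triple $(h_1,h_2,R)$ produced by the first lemma (with $R$ a Taylor polynomial and the weight $1/(1+z^2)$ coming from the remark), while on $(z_0,\infty)$ one uses the triple produced by the second lemma (with $R$ rational). The contour $L$ is the union of the two ranges $u\in[0,1/\cos\alpha]$ and $u\in(-\infty,0)$, and each lemma controls its own portion; nothing is glued.

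The gap in your approach is this: the identity $\psi\rho=h_1^{(1)}+h_2^{(1)}+R^{(1)}$ furnished by the first lemma is \emph{only valid on $[-z_0,z_0]$}. Recall that in its proof one sets $f(\theta)=(h/a)(z(\theta))H(-|\theta|+64z_0^5)$, where $z(\theta)$ is the inverse of $\theta$ on the monotonic interval $[-z_0,z_0]$. Thus $h_1^{(1)}(z)+h_2^{(1)}(z)=a(z)f(\theta(z))$, and for $z\in(z_0,\infty)$ this is either zero (when $|\theta(z)|>64z_0^5$) or equals $a(z)(h/a)(z')$ with $z'\in[-z_0,z_0]$ the \emph{wrong} preimage of $\theta(z)$---in neither case does it equal $h^{(1)}(z)=\psi\rho(z)-R^{(1)}(z)$. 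Consequently your summed decomposition $\rho=h_1+h_2+R$ fails as an identity on $(z_0,\infty)$, precisely the interval where you need it for the $u<0$ portion of $L$. The cutoff $\psi$ does not repair this, because the obstruction lives in the change of variables $z\mapsto\theta$, not in the size of $\rho$.

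The fix is simply to drop the partition of unity and state the decomposition interval-by-interval, as the paper (and Deift--Zhou) do. The function $\rho$ actually used downstream is itself piecewise ($\rho=-r$ on $|z|>z_0$ and $\rho=r/(1-|r|^2)$ on $|z|<z_0$), so a piecewise decomposition is exactly what is needed. Your remarks about $n$, $q$, and the replacement $a(z)\mapsto a(z)/(z+i)^2$ are correct and are the only real content beyond quoting the two lemmas; the conjugation and parity arguments in your last paragraph are also correct and match the paper.
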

	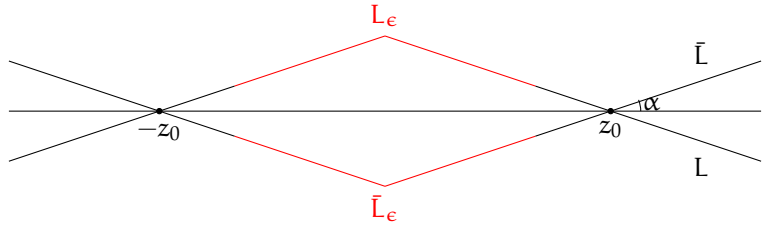
\begin{figure}[h]
		\centering
		\begin{tikzpicture}
		\draw (-5,0)--(5,0);
		\node [below] at (3,0) {$z_0$};
		\node [below] at (-3,0) {$-z_0$};
		\filldraw (3,0) circle (1pt);
		\filldraw (-3,0) circle (1pt);
		\draw [red] (0,1)--(2,1/3);
		\draw (2,1/3)--(5,-2/3);
		\draw [red] (0,1)--(-2,1/3);
		\draw (-2,1/3)--(-5,-2/3);
		\draw [red] (0,-1)--(2,-1/3);
		\draw (2,-1/3)--(5,2/3);
		\draw [red] (0,-1)--(-2,-1/3);
		\draw (-2,-1/3)--(-5,2/3);
		\draw (3.4,0) arc (0:22:0.4);
		\node [right] at (3.3,0.1) {$\tiny\alpha$};
		\node [above] at (4.2,1/2) {$\bar{L}$};
		\node [below] at (4.2,-1/2) {$L$};
		\node [above,red] at (0,1) {$L_\epsilon$};
		\node [below,red] at (0,-1) {$\bar{L}_\epsilon$};
		\end{tikzpicture}
		\caption{Contours of $L$ and $L_\epsilon$ and their conjugations for both $z_0$ and $-z_0$}\label{deformedContour}
	\end{figure}
	
	\section{Contour Deformation of the RHP1}
	In this section, we deform the original RHP to the new contour $\Sigma=L\cup \bar{L}\cup \R$. For convenience we change the orientation for $|z|>z_0$, which is done by taking the inverse of the original jump matrix.
	
	Let $\Omega_j,j=1,2,...,8$ defined in Fig.\ref{deformRigen}, as well as the orientation of the contours.
	\begin{figure}[h]
		\centering
		\begin{tikzpicture}
		\draw (-5,0)--(5,0);
		\node [below] at (3,0) {$z_0$};
		\node [below] at (-3,0) {$-z_0$};
		\filldraw (3,0) circle (1pt);
		\filldraw (-3,0) circle (1pt);
		\draw [red,->] (0,1)--(2,1/3);
		\draw (2,1/3)--(5,-2/3);
		\draw [red] (0,1)--(-2,1/3);
		\draw [<-](-2,1/3)--(-5,-2/3);
		\draw [red,->] (0,-1)--(2,-1/3);
		\draw (2,-1/3)--(5,2/3);
		\draw [red] (0,-1)--(-2,-1/3);
		\draw [<-](-2,-1/3)--(-5,2/3);
		\draw [<-] (4,1/3)--(4.5,1/2);
		\draw [<-] (4,-1/3)--(4.5,-1/2);
		\draw [->] (-4,1/3)--(-4.5,1/2);
		\draw [->] (-4,-1/3)--(-4.5,-1/2);
		\draw (3.4,0) arc (0:22:0.4);
		\node [right] at (3.3,0.1) {$\tiny\alpha$};
		\node [above] at (4.2,1/2) {$\bar{L}$};
		\node [below] at (4.2,-1/2) {$L$};
		\node [above,red] at (0,1) {$L_\epsilon$};
		\node [below,red] at (0,-1) {$\bar{L}_\epsilon$};
		\node [above] at (5.1,0.1) {$\Omega_1$};
		\node [below] at (0,-0.3) {$\Omega_2$};
		\node [above] at (-5.1,0.1) {$\Omega_3$};
		\node [below] at (5.1,-0.1) {$\Omega_4$};
		\node [above] at (0,0.3) {$\Omega_5$};
		\node [below] at (-5.1,-0.1) {$\Omega_6$};
		\node [above] at (0.2,1.3) {$\Omega_7$};
		\node [below] at (0.2,-1.5) {$\Omega_8$};
		\end{tikzpicture}
		\caption{Contours of $L$ and $L'$ and their conjugations for both $z_0$ and $-z_0$}\label{deformRigen}
	\end{figure}
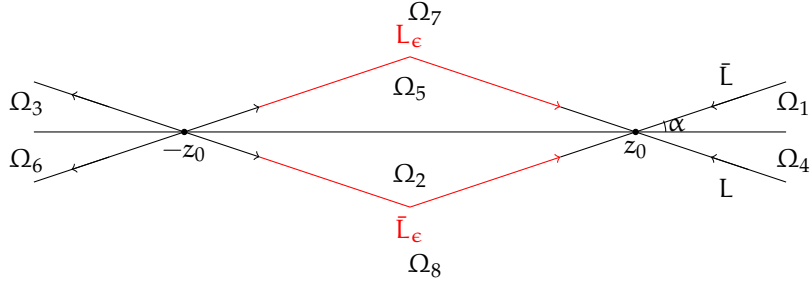
	And consider the factorization of jump matrix of RHP1', combining the decomposition lemma, let $\rho=-r(z)H(|z|>z_0)+\frac{r}{1-|r|^2}H(|z|<z_0)$, the jump matrix on \eqref{RHP1'} $\delta_-^{-\sigma_3}v\delta_+^{\sigma_3}$ can be rewritten as  \[ \begin{pmatrix}
	1&0\\-\bar{\rho}\delta_-^{-2} & 1
	\end{pmatrix}
	\begin{pmatrix}
	1&
	\rho\delta_+^{2}\\
	0&1
	\end{pmatrix}=:b_-^{-1}b_+.\] Due to the decomposition lemma $\rho=h_1+h_2+R$, denote \[b_-=b_-^ob_-^a=\begin{pmatrix}
	1&0\\\bar{h_1}\delta_-^{-2} & 1
	\end{pmatrix}\begin{pmatrix}
	1&0\\\overline{h_2+R}\delta_-^{-2} & 1
	\end{pmatrix}\]
	\[b_+=b_+^ob_+^a=\begin{pmatrix}
	1&
	h_1\delta_+^{2}\\
	0&1
	\end{pmatrix}\begin{pmatrix}
	1&
	(h_2+R)\delta_+^{2}\\
	0&1
	\end{pmatrix}\]
	and define
	\[w_{\pm}:=\pm(b_{\pm}-I)\]
	\[w=w_++w_-.\]
	
	It is easy to show that for fix $x,t$, we have $w_\pm,w\in L^2\cap L^1\cap L^\infty.$
	
	\begin{problem}{RHP2}
		Setting 
		\begin{equation}
		m^\sharp(z)=\begin{cases}
		m\delta^{-\sigma_3},&\quad z\in \Omega_7\cup \Omega_8\\
		m\delta^{-\sigma_3}(b_+^a)^{-1},&\quad z\in \Omega_4\cup\Omega_5\cup\Omega_6\\
		m\delta^{-\sigma_3}(b_-^a)^{-1},&\quad z\in \Omega_1\cup\Omega_1\cup\Omega_3\\
		\end{cases}
		\end{equation}
		\begin{equation}
		v^\sharp(z)=\begin{cases}
		(b_-^o)^{-1}b_+^o,&\quad z\in \R\\
		b_+^a,&\quad z\in L\\
		(b_-^a)^{-1},&\quad z\in \bar{L}\\
		\end{cases}
		\end{equation}
		
		\begin{equation}
		\label{RHP2}
		\text{RHP2}=
		\begin{cases}
		m^\sharp_+=m^\sharp_-e^{-it\theta\hat{\sigma_3}}v^\sharp\\
		m^\sharp(\infty)=I\\
		\end{cases}
		\end{equation}
	\end{problem}
	One thing we need to check is the normalization condition, $m^\sharp(\infty)=I$, which follows directly by using the estimates in the decomposition lemma and estimates for the scalar RHP.
	\section{Truncation of the contours}
	Following the analysis in Deift-Zhou's method,especially the restriction lemma (\cite{deift_steepest_1993} Lemma 2.56), since we have similar decomposition for $\rho$($\bar{\rho}$), we can easily estimate the errors generated from the truncating contours. For reader's convenience, we will list the lemmas which trivially follow from the decomposition lemma. Before that, we introduce some new notations first. 
	
	Set $w':\Sigma\rightarrow M(2,\C)$ supported on $\Sigma'=\Sigma\backslash (L_\epsilon\cup\bar{L}_\epsilon\cup \R)$ with contributions from $R(\bar{R})$ only. And denote the difference of $w^\sharp$ and $w'$ as $w^e$, see Fig.\ref{fig:truncate errors},  where $w^\sharp=w^\sharp_++w^\sharp_-$ and $w^\sharp_\pm=\pm(b^\sharp_\pm-I)$. In what follows, it takes two steps to reduce RHP-data $(w^\sharp,\Sigma)$ to $(w',\Sigma')$. The following estimate show the $L^2(dz)$ uniform boundedness for $w',w^\sharp,w^e$ while their $L^1$ and $L^\infty$ boundedness are directly from  the decomposition lemma.
	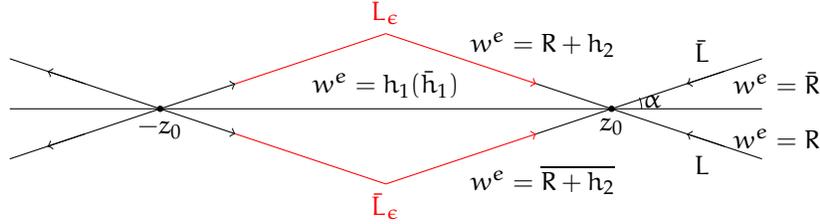
\begin{figure}[h]
		
		\centering
		\begin{tikzpicture}
		\draw (-5,0)--(5,0);
		\node [below] at (3,0) {$z_0$};
		\node [below] at (-3,0) {$-z_0$};
		\filldraw (3,0) circle (1pt);
		\filldraw (-3,0) circle (1pt);
		\draw [red,->] (0,1)--(2,1/3);
		\draw (2,1/3)--(5,-2/3);
		\draw [red] (0,1)--(-2,1/3);
		\draw [<-](-2,1/3)--(-5,-2/3);
		\draw [red,->] (0,-1)--(2,-1/3);
		\draw (2,-1/3)--(5,2/3);
		\draw [red] (0,-1)--(-2,-1/3);
		\draw [<-](-2,-1/3)--(-5,2/3);
		\draw [<-] (4,1/3)--(4.5,1/2);
		\draw [<-] (4,-1/3)--(4.5,-1/2);
		\draw [->] (-4,1/3)--(-4.5,1/2);
		\draw [->] (-4,-1/3)--(-4.5,-1/2);
		\draw (3.4,0) arc (0:22:0.4);
		\node [right]at (1,0.85)  {$w^e=R+h_2$};
		\node[right ] at (1,-0.95)  {$w^e=\overline{R+h_2}$};
		\node [right ]at (4.5,-0.4) {$w^e=R$};
		\node [right ]at (4.5,0.34) {$w^e=\bar{R}$};
		\node [right] at (3.3,0.1) {$\tiny\alpha$};
		\node [above] at (0,0) {$w^e=h_1(\bar{h}_1)$};
		\node [above] at (4.2,1/2) {$\bar{L}$};
		\node [below] at (4.2,-1/2) {$L$};
		\node [above,red] at (0,1) {$L_\epsilon$};
		\node [below,red] at (0,-1) {$\bar{L}_\epsilon$};
		\end{tikzpicture}
		
		\caption{The errors of the truncation of contours.}
		\label{fig:truncate errors}
	\end{figure}
	\begin{lemma} \label{truncatelemma}
		$\|w^\sharp\|_{L^2(\Sigma,dz)}\leq Ct^{-1/4}$,\quad $\|w^e\|_{L^2(\Sigma,dz)}\leq Ct^{-k}$, \quad$\|w'\|_{L^2(\Sigma,dz)}\leq Ct^{-1/4}$
	\end{lemma}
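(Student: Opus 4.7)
The plan is to use the decomposition $\rho=h_1+h_2+R$ of Theorem~\ref{fundamentaldecomp} and the uniform boundedness of $\delta^{\pm1}$ from the scalar RHP to estimate the three norms by splitting each integrand according to which of $h_1, h_2, R$ it contains. By construction $w^\sharp = w' + w^e$ as functions on $\Sigma$, so once $\|w'\|_{L^2}\leq Ct^{-1/4}$ and $\|w^e\|_{L^2}\leq Ct^{-k}$ are in hand, the bound on $\|w^\sharp\|_{L^2}$ follows from the triangle inequality.

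I would first bound $\|w^e\|_{L^2(\Sigma)}$. By construction, $w^e$ consists of the $h_1$-contribution on $\R$, the $h_2$-contribution on $(L\setminus L_\epsilon)\cup(\bar L\setminus \bar L_\epsilon)$, and both the $h_2$ and $R$ contributions on $L_\epsilon\cup\bar L_\epsilon$ (together with the symmetric pieces near $-z_0$). The pointwise bound
\[
|h_j(z)\,\delta^{\pm 2}(z)\,e^{-2it\theta(z)}|\leq \frac{Ct^{-k}}{1+|z|^{2}}
\]
from Theorem~\ref{fundamentaldecomp} gives an $L^2$-contribution of size $Ct^{-k}$, while on the finite-length arc $L_\epsilon$ the estimate $|R(z)e^{-2it\theta(z)}|\leq Ce^{-16c(\alpha)z_0^{5}\epsilon^{2}t}$ contributes at most $Ce^{-ct}$. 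Summing over all pieces yields $\|w^e\|_{L^2}\leq Ct^{-k}$ for every $k\in\N$.

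For $\|w'\|_{L^2(\Sigma)}$, the support is $(L\setminus L_\epsilon)\cup(\bar L\setminus \bar L_\epsilon)$ (with its counterpart at $-z_0$) and the data is $R\delta_+^{2}$ (or its conjugate). Since $R$ itself enjoys no $t^{-k}$ decay, I would use the quadratic lower bound $\Re(i\theta(z))\geq 16c(\alpha)z_0^{5}u^{2}$ established inside the proof of Theorem~\ref{fundamentaldecomp}, giving $|e^{-2it\theta(z)}|^{2}\leq e^{-64c(\alpha)z_0^{5}u^{2}t}$ along $z(u)=z_0+uz_0 e^{i(\pi-\alpha)}$. Parametrizing with $|dz|=z_0\,du$ and splitting the $u$-integral into $u\in[0,\epsilon]$ (where the polynomial form of $R$ from the first decomposition lemma is bounded) and $u\in(-\infty,0)$ (where the rational form $R\sim (z-i)^{-10}$ from the second decomposition lemma is bounded along $L$), in either range the Gaussian integral $\int e^{-ctu^{2}}\,du\leq Ct^{-1/2}$ controls the squared $L^2$-norm. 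Combined with the symmetric estimate at $-z_0$ via $r(z)=\overline{r(-\bar z)}$, this yields $\|w'\|_{L^2}^{2}\leq Ct^{-1/2}$, hence $\|w'\|_{L^2}\leq Ct^{-1/4}$, and the triangle inequality then gives $\|w^\sharp\|_{L^2}\leq Ct^{-1/4}+Ct^{-k}\leq Ct^{-1/4}$.

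The chief obstacle is obtaining the correct $t^{-1/4}$ rate for $w'$: the factor $e^{-ctu^{2}}$ effectively confines the integrand to a Gaussian window of width $O(t^{-1/2})$ around the stationary point, and the $L^2$-norm squared therefore inherits a factor of $t^{-1/2}$. This argument is sensitive to two choices made upstream, namely that the opening angle $\alpha$ be small enough for the quadratic lower bound on $\Re(i\theta)$ to remain strictly positive along the entire ray, and that the unbounded tail $u\to-\infty$ of $L$ be treated with the rational (not merely polynomial) form of $R$ from the second decomposition lemma, so that the integrand stays bounded as $|z|\to\infty$.
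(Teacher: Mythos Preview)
Your proposal is correct and follows essentially the same route as the paper: the paper also derives $\|w^e\|_{L^2}$ directly from the decomposition lemma, bounds $\|w'\|_{L^2}$ via the quadratic lower bound $\Re(i\theta)\geq 16c(\alpha)z_0^{5}u^{2}$ together with a Gaussian integral in $u$ yielding $Ct^{-1/4}$, and then obtains $\|w^{\sharp}\|_{L^2}$ by the triangle inequality. Your explicit distinction between the polynomial and rational forms of $R$ on the inner and outer portions of $L$ is slightly more careful than the paper, which simply invokes $|R|\leq C(1+|z|^{2})^{-1}$ on all of $L$, but the underlying argument is the same.
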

	\begin{proof}
		The second estimate directly comes from the decomposition lemma. For the last estimate, consider $w'_+$ first, since $|R|\leq C(1+|z|^2)^{-1}$ for $z\in L$, and also we have
		\begin{equation*}
		\Re{(i\theta)}\geq 16c(\alpha)z_0^5u^2 
		\end{equation*} 
		Hence by the estimate for the scalar RHP, we have 
		\begin{equation}
		\begin{split}
		\|w'_+\|_{L^2}&\leq (\int_{L}|\delta^2(z)R(z)|^2e^{-2t\Re{(i\theta)}}|dz|)^{1/2}\\
		&\leq C(\int_{(-\infty,1/\cos{\alpha}]}e^{-32c(\alpha)z_0^5u^2t}du)^{1/2}\\
		&\leq Ct^{-1/4}
		\end{split}
		\end{equation}
		And we have similar estimate for $\bar{R}$, then by triangle inequality we have
		\begin{equation*}
		\|w'\|_{L^2}\leq \|w'_+\|_{L^2}+\|w'_-\|_{L^2}\leq Ct^{-1/4}
		\end{equation*}
		Then, the first estimate comes from the triangle inequality.
	\end{proof}
	The first reduction is to reduce $(w^\sharp,\Sigma)$ to $(w',\Sigma)$. It is essential to show the boundedness of $(1-C_{w'})^{-1}$ and $(1-C_{w^\sharp})^{-1}$ first. We will prove the following two propositions:
	\begin{prop}
		$(1-C_{w'})^{-1}$ is uniformly bounded form $L^{\infty}(\Sigma)+L^2(\Sigma)$ to $L^2(\Sigma)$ for $t$ sufficiently large.
	\end{prop}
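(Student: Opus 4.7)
The plan is in three stages. First, to establish that $C_{w'}$ extends to a bounded operator from $L^\infty(\Sigma')+L^2(\Sigma')$ into $L^2(\Sigma')$; second, to prove that $1-C_{w'}$ is invertible on $L^2(\Sigma')$ at each fixed large $t$ via the Beals--Coifman/Fredholm machinery together with a vanishing lemma; and third, to promote this to a uniform-in-$t$ bound.

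For the first step, the boundedness of the Cauchy projections $C_\pm$ on $L^2$ of the piecewise smooth contour $\Sigma'$ combined with the definition $C_{w'}f=C_-(fw'_+)+C_+(fw'_-)$ yields, by H\"older, both $\|C_{w'}f\|_{L^2}\leq C\|w'\|_{L^\infty}\|f\|_{L^2}$ and $\|C_{w'}f\|_{L^2}\leq C\|w'\|_{L^2}\|f\|_{L^\infty}\leq Ct^{-1/4}\|f\|_{L^\infty}$, where the latter invokes Lemma \ref{truncatelemma}. Splitting an input $f=f_\infty+f_2$ therefore gives a bounded extension $L^\infty+L^2\to L^2$ with norm controlled by $\|f_2\|_{L^2}+Ct^{-1/4}\|f_\infty\|_{L^\infty}$.

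For the second step, the triangular factorization $v'=(b_-^R)^{-1}b_+^R$ satisfies $\det v'=1$ and $w'\in L^2\cap L^\infty$, placing us in the standard Beals--Coifman framework, so $1-C_{w'}$ is Fredholm of index zero on $L^2(\Sigma')$. Injectivity is by the standard vanishing lemma: for $\mu\in\ker(1-C_{w'})$, the function $m(z)=\int_{\Sigma'}\mu(s)(w'_+(s)+w'_-(s))\,ds/(2\pi i(s-z))$ is analytic off $\Sigma'$, vanishes at infinity, and satisfies $m_+=m_-v'$; the Schwarz-type symmetries inherited from $\bar r(-\bar z)=r(z)$ together with the identities for $\delta$ in \eqref{scalarRHPidentity2.0}--\eqref{scalarRHPidentity3} render $\int m_+(z)\,m_-(z)^*\,dz$ sign-definite, forcing $m\equiv 0$ and hence $\mu\equiv 0$.

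The third step, uniform-in-$t$ boundedness, is the main obstacle. A naive Neumann series on $C_{w'}$ fails because $\|w'\|_{L^\infty}$ does not tend to zero: the stationary points $\pm z_0$ themselves belong to $\Sigma'$, and there $|e^{-2it\theta}|=1$. The strategy is to split $w'=w'_{\rm far}+w'_{\rm near}$, where $w'_{\rm far}$ is supported outside fixed balls around $\pm z_0$; on those sets $\Re(i\theta)\geq\delta>0$, so $\|w'_{\rm far}\|_{L^\infty}\to 0$ exponentially and $(1-C_{w'_{\rm far}})^{-1}$ is uniformly bounded by Neumann series. The compactly supported piece $C_{w'_{\rm near}}$ near each stationary point is then handled by pulling out the local parabolic-cylinder parametrix constructed later in the paper, which furnishes a uniformly bounded local inverse; the resolvent identity assembles these pieces into the desired uniform bound on $(1-C_{w'})^{-1}$.
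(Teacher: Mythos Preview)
Your approach is considerably more elaborate than the paper's, and the extra work is warranted. The paper's proof is a direct Neumann-series argument: it first shows $\|C_{w'}f\|_{L^2}\le Ct^{-1/4}\|f\|_{L^\infty}$ from $\|w'\|_{L^2}\le Ct^{-1/4}$ (which matches your first step exactly), and then asserts $\|C_{w'}f\|_{L^2}\le Ct^{-k}\|f\|_{L^2}$ from a claimed bound $\|w'_\pm\|_{L^\infty}\le Ct^{-k}$, concluding $\|C_{w'}\|_{L^\infty+L^2\to L^2}<1$ for $t$ large.

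The concern you raise in your third step is precisely the weak point of that shortcut. The decomposition lemma controls $|Re^{-2it\theta}|$ only on $L_\epsilon$ (the segments with $u>\epsilon$), not on all of $\Sigma'$; at the stationary points $\pm z_0\in\Sigma'$ one has $|e^{\mp 2it\theta}|=1$ and $R(\pm z_0)=\rho(\pm z_0)$ generically nonzero, so $\|w'\|_{L^\infty(\Sigma')}$ is $O(1)$ rather than $O(t^{-k})$. The paper's second estimate is therefore not justified by the lemmas stated. Your route---Fredholm index zero plus a vanishing lemma for existence at each $t$, then a near/far splitting with the parabolic-cylinder parametrix supplying a uniformly bounded local inverse near $\pm z_0$---is the standard way this step is actually closed in the Deift--Zhou framework, and it is what the later sections of the paper effectively carry out when building the model problem. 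So while the paper records the quick version, your argument is the one that makes the uniform bound rigorous.
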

	\begin{proof}
		It is equivalent to show that there exits $t_0$ such that for  $t>t_0$, $\|C_{w'}\|_{L^2}<1$, where $C_{w'}$ maps $L^\infty$ to $L^2$ since $w'(w'_{\pm})\in L^2$. In fact, taking $f\in L^\infty(\Sigma,dz)$ \begin{equation}
		\begin{split}
		\|C_{w'}f\|_{L^2}&\leq \|C_+(fw'_-)\|_{L^2}+\|C_-(fw'_+)\|_{L^2}\\
		&\leq (\|w'_+\|_{L^2}+\|w'_-\|_{L^2})\|f\|_{L^\infty}\\
		&\leq Ct^{-1/4}\|f\|_{L^\infty}
		\end{split}
		\end{equation}
		And by choosing $t_0=C^4+1$, we have $\|C_{w'}\|_{L^2}<1$ uniformly with respect to $z\in \Sigma$.
		 
		For $f\in L^2(\Sigma,dz)$,
		\begin{equation}
		\begin{split}
		\|C_{w'}f\|_{L^2}&\leq \|C_+(fw'_-)\|_{L^2}+\|C_-(fw'_+)\|_{L^2}\\
		&\leq \|w'_-\|_{L^\infty}\|C_+(f)\|_{L^2}+\|w'_+\|_{L^\infty}\|C_-(f)\|_{L^2}\\
		&\leq Ct^{-k}\|f\|_{L^2}
		\end{split}
		\end{equation} 
		and choosing $t_0=C^k+1$, we have $\|C_{w'}f\|_{L^2}<1$ uniformly with respect to $z$. 
		Then there exists a $t_0$, such that $C_{w'}$ is uniformly bounded from $L^2+L^\infty\rightarrow L^2$
	\end{proof}
	Similarly, we have the following proposition for $(w^\sharp,\Sigma)$:
	\begin{prop}
		$(1-C_{w^\sharp})^{-1}$ is uniformly bounded form $L^{\infty}(\Sigma)$ to $L^2(\Sigma)$ for $t$ sufficiently large.
	\end{prop}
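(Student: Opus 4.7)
The plan is to treat $C_{w^\sharp}$ as a rapidly decaying perturbation of $C_{w'}$ and then bootstrap from the previous proposition. Writing $w^\sharp = w' + w^e$ so that $C_{w^\sharp} = C_{w'} + C_{w^e}$, I would recast the resolvent equation $(1 - C_{w^\sharp})g = f$ with $f \in L^\infty(\Sigma)$ in the form
\[
(1 - C_{w'}) g \;=\; f + C_{w^e} g,
\]
and then invert $1 - C_{w'}$ on the right-hand side using the previous proposition.

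The key quantitative input is the smallness of $C_{w^e}$. Combining Lemma \ref{truncatelemma} with the decomposition lemma, both $\|w^e\|_{L^2(\Sigma)}$ and $\|w^e\|_{L^\infty(\Sigma)}$ decay faster than any power of $t$, since $w^e$ is built from the pieces $h_1,h_2,R$ on the truncated portions of the contour where those pieces enjoy exponential or arbitrary polynomial decay. This gives $\|C_{w^e}\|_{L^2(\Sigma)\to L^2(\Sigma)}=O(t^{-k})$ for every $k$, as well as $\|C_{w^e}\|_{L^\infty(\Sigma)\to L^2(\Sigma)}=O(t^{-k})$. Applying $(1-C_{w'})^{-1}$, which is bounded $L^\infty(\Sigma)+L^2(\Sigma)\to L^2(\Sigma)$ by the previous proposition, yields
\[
g \;=\; (1-C_{w'})^{-1}f + (1-C_{w'})^{-1}C_{w^e}\,g.
\]
Since $(1-C_{w'})^{-1}C_{w^e}$ has operator norm $O(t^{-k})$ on $L^2(\Sigma)$, it is a strict contraction for $t$ sufficiently large, and a Neumann series inversion produces
\[
g \;=\; \bigl(1 - (1-C_{w'})^{-1}C_{w^e}\bigr)^{-1}(1-C_{w'})^{-1}f,
\]
so that $\|g\|_{L^2}\le C\|f\|_{L^\infty}$ uniformly for $t$ large. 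This is exactly the asserted mapping bound.

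The main obstacle, I expect, is the bookkeeping of mapping spaces at each step of the second-resolvent manipulation: the input $f$ only lies in $L^\infty$, while the feedback term $C_{w^e}g$ must land in a space on which $(1-C_{w'})^{-1}$ is already known to be bounded. The dual $L^2/L^\infty$ smallness of $w^e$ furnished by the decomposition lemma is precisely what reconciles these two regimes and makes the Neumann argument converge; one must simply verify that this smallness transfers through the Cauchy operator $C_{w^e}$ in both directions, which is the standard $\|C_{w^e}f\|_{L^2}\le \|w^e_+\|_{L^\infty}\|C_-f\|_{L^2}+\|w^e_-\|_{L^\infty}\|C_+f\|_{L^2}$ (and the analogous $L^\infty\to L^2$ estimate) used already in the preceding proof.
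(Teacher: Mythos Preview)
Your perturbation argument is correct, but it is not the route the paper takes. The paper disposes of this proposition with the single word ``Similarly'': one simply repeats the proof of the preceding proposition verbatim with $w^\sharp$ in place of $w'$, invoking the first estimate of Lemma~\ref{truncatelemma}, $\|w^\sharp\|_{L^2(\Sigma)}\le Ct^{-1/4}$, to bound $\|C_{w^\sharp}f\|_{L^2}\le(\|w^\sharp_+\|_{L^2}+\|w^\sharp_-\|_{L^2})\|f\|_{L^\infty}\le Ct^{-1/4}\|f\|_{L^\infty}$ directly. No splitting $w^\sharp=w'+w^e$ and no second-resolvent identity are needed.

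Your approach is more elaborate but not without merit: by writing $C_{w^\sharp}=C_{w'}+C_{w^e}$ and bootstrapping from the already-established invertibility of $1-C_{w'}$, you make explicit that the passage from $w'$ to $w^\sharp$ costs only an $O(t^{-k})$ perturbation, which is exactly the mechanism behind Lemma~\ref{first truncate} in the next step. So your argument essentially previews the resolvent manipulation used there, whereas the paper keeps the two propositions logically independent and only combines them afterwards. The paper's direct route is shorter; yours exposes more of the structure and would generalize more readily if, say, the $L^2$ bound on $w^\sharp$ were not available on its own.
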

	Then by the lemma \eqref{truncatelemma} and the resolvent identities, it is by direct computation to show the following estimate:
	\begin{lemma}\label{first truncate}
		\begin{equation}
		q(x,t)=2\left[-\int_{\Sigma}((1-C_{w'})^{-1}I)w'(s)\right]_{12}+O(t^{-k}),\forall k\in \N
		\end{equation}
	\end{lemma}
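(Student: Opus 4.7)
The starting point is the Beals--Coifman representation applied to \textbf{RHP2}. By the discussion in Section~3 and the recovery formula \eqref{recoveringsolution}, the solution can be written as
\[
q(x,t)=\frac{1}{\pi}\left[\int_{\Sigma}((1-C_{w^\sharp})^{-1}I)(s)\,w^\sharp(s)\,ds\right]_{12},
\]
so the task is to replace $(w^\sharp,\Sigma)$ by $(w',\Sigma)$ inside this integral at the cost of an $O(t^{-k})$ error. The mechanism is the second resolvent identity
\[
(1-C_{w^\sharp})^{-1}-(1-C_{w'})^{-1}=(1-C_{w'})^{-1}\,C_{w^e}\,(1-C_{w^\sharp})^{-1},
\]
which is legitimate because $C_{w^\sharp}-C_{w'}=C_{w^e}$ and both resolvents are uniformly bounded on the relevant spaces for $t$ large, by the two propositions just proved.

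Substituting $w^\sharp=w'+w^e$ into the $w^\sharp$-integral and inserting the resolvent identity produces three error pieces that I estimate in turn. First, $\int_{\Sigma}I\cdot w^e\,ds$ is bounded by $\|w^e\|_{L^1(\Sigma)}$, which is $O(t^{-k})$ since the decomposition theorem gives the pointwise bounds $|h_1|,|h_2|\le c t^{-k}(1+|z|^2)^{-1}$ on $\R$ and $L,\bar L$ and $|R|\le c e^{-16c(\alpha)z_0^5\epsilon^2 t}$ on $L_\epsilon,\bar L_\epsilon$. Second, $\int_{\Sigma}\bigl((1-C_{w'})^{-1}C_{w'}I\bigr)w^e\,ds$ is handled by Cauchy--Schwarz, combining $\|(1-C_{w'})^{-1}\|_{L^2\to L^2}\le C$ from the first proposition, $\|C_{w'}I\|_{L^2}\le C\|w'\|_{L^2}=O(t^{-1/4})$ from Lemma~\ref{truncatelemma}, and $\|w^e\|_{L^2}=O(t^{-k})$. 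Third, the genuine resolvent-difference term
\[
\int_{\Sigma}\bigl((1-C_{w'})^{-1}C_{w^e}(1-C_{w^\sharp})^{-1}I\bigr)w^\sharp\,ds
\]
is bounded by using the $L^\infty\to L^2$ bound on $(1-C_{w^\sharp})^{-1}$ from the second proposition, the operator estimate $\|C_{w^e}\|_{L^2\to L^2}\le C\|w^e\|_{L^\infty}=O(t^{-k})$, the $L^2\to L^2$ bound on $(1-C_{w'})^{-1}$, and finally Cauchy--Schwarz paired against $\|w^\sharp\|_{L^2}=O(t^{-1/4})$. All three pieces are thus $O(t^{-k})$ for any $k\in\N$.

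Collecting, $q(x,t)$ equals $\frac{1}{\pi}\bigl[\int_{\Sigma}((1-C_{w'})^{-1}I)w'\,ds\bigr]_{12}$ up to $O(t^{-k})$; the stated prefactor $2$ and the minus sign in front of the integral come from combining $ds/\pi$ with the $i$-factor in the normalization $q=-2i[m_1]_{12}$ of \eqref{recoveringsolution} (the ``$\int$'' in the statement is understood with the Cauchy measure $\dts=ds/(2\pi i)$). The principal obstacle in the argument is not the algebraic manipulation but making sure every constant above is \emph{uniform} in $(x,t)$ throughout the linear-oscillation region $-x=O(t)$: this uniformity rests on the fact that $z_0=(|x|/(80t))^{1/4}$ stays bounded above and bounded away from $0$, which in turn keeps the exponent $16c(\alpha)z_0^5$ in the decomposition theorem strictly positive and keeps the scalar-RHP bounds on $\delta^{\pm 1}$ controlled.
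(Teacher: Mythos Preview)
Your argument is correct and is exactly the approach the paper intends: the paper's own proof of this lemma is the single sentence ``by the lemma~\ref{truncatelemma} and the resolvent identities, it is by direct computation,'' and you have carried out precisely that computation, splitting the difference via the second resolvent identity and controlling the three error pieces with the $L^1$, $L^2$, and $L^\infty$ bounds on $w^e$ supplied by the decomposition theorem and Lemma~\ref{truncatelemma}. Your closing remark on uniformity in the region $-x=O(t)$ is a useful clarification that the paper leaves implicit.
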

%
%
%

	And recall the restriction lemma in Deift-Zhou's paper\cite{deift_steepest_1993},Lemma 2.56, we can restrict the Cauchy operator on $\Sigma$ to $\Sigma'$ without errors, i.e. $(1_{\Sigma'}-C_{w'}^{\Sigma'})^{-1}I=(1_{\Sigma}-C_{w'}^{\Sigma})^{-1}I$. And we finally have the following proposition:
	\begin{prop}
		\begin{equation}
		q(x,t)=2\left[-\int_{\Sigma'}((1-C_{w'})^{-1}I)w'(s)\right]_{12}+O(t^{-k}),\forall k\in \N
		\end{equation}
	\end{prop}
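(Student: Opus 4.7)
The plan is to deduce this proposition directly from Lemma~\ref{first truncate} together with the Deift--Zhou restriction lemma (\cite{deift_steepest_1993}, Lemma 2.56). Since essentially all the analytical work has already been done, the proof should be short and almost purely bookkeeping in nature.

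First, I would start from the conclusion of Lemma~\ref{first truncate}, which gives
\[
q(x,t)=2\left[-\int_{\Sigma}((1-C_{w'})^{-1}I)(s)\,w'(s)\right]_{12}+O(t^{-k}).
\]
Here the operator $(1-C_{w'})^{-1}$ is understood as acting on $L^2(\Sigma)$, and its invertibility for $t$ large enough was established in the preceding propositions. Next, I would observe that by construction $w'$ is supported only on $\Sigma'=\Sigma\setminus(L_\epsilon\cup\bar{L}_\epsilon\cup\R)$; indeed, $w'$ retains only the rational piece $R$ (resp.\ $\bar R$) of the decomposition $\rho=h_1+h_2+R$, and those pieces were defined on exactly the truncated crosses. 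Consequently the integrand vanishes identically off $\Sigma'$, so $\int_\Sigma(\cdot)=\int_{\Sigma'}(\cdot)$ for free.

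The only remaining point is to confirm that the factor $((1-C_{w'})^{-1}I)(s)$ appearing inside the integral, originally defined as an element of $L^2(\Sigma)$, can be replaced by the analogous quantity built using the Cauchy operator $C_{w'}^{\Sigma'}$ on $L^2(\Sigma')$ without introducing any additional error. This is exactly the content of Lemma 2.56 in Deift--Zhou: since $w'$ vanishes on $\Sigma\setminus\Sigma'$, the operator $C_{w'}^{\Sigma}$ sends any $f\in L^2(\Sigma)$ to a function whose restriction to $\Sigma'$ depends only on the restriction of $f$ to $\Sigma'$; the resulting commutative diagram between restriction and the resolvent gives the identity
\[
(1_{\Sigma'}-C_{w'}^{\Sigma'})^{-1}I\;=\;\bigl((1_{\Sigma}-C_{w'}^{\Sigma})^{-1}I\bigr)\big|_{\Sigma'}.
\]
Plugging this into the previous display and using that the integrand is concentrated on $\Sigma'$ yields the claimed formula.

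The main (modest) obstacle is simply to verify that all hypotheses of the cited restriction lemma are in place: the $L^2$-boundedness of $C_{w'}^{\Sigma}$ and $C_{w'}^{\Sigma'}$, the invertibility of the corresponding $1-C_{w'}$ for large $t$, and the vanishing support of $w'$ on $\Sigma\setminus\Sigma'$. The first is immediate from $w'_\pm\in L^\infty$, the second was just established in the preceding proposition (together with the analogous argument on $\Sigma'$ using Lemma~\ref{truncatelemma}), and the third is by definition. Once these are noted, the restriction lemma applies verbatim, the $O(t^{-k})$ error term is inherited from Lemma~\ref{first truncate}, and the proof is complete.
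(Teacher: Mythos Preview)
Your proposal is correct and follows essentially the same route as the paper: the paper also derives this proposition directly from Lemma~\ref{first truncate} by invoking the Deift--Zhou restriction lemma (\cite{deift_steepest_1993}, Lemma~2.56) to pass from $\Sigma$ to $\Sigma'$ without incurring any additional error. Your verification of the hypotheses (support of $w'$, boundedness and invertibility of $1-C_{w'}$) is a bit more explicit than the paper's, but the argument is the same.
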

	The corresponding RHP reads:
	
	Set
	\begin{equation*}
	L'=L\backslash L_\epsilon
	\end{equation*}
	and
	\begin{equation}
	\Sigma'=L'\cup \overline{L'}
	\end{equation}
	Define the sectional analytic function $m'(z),z\notin \Sigma'$ as
	\begin{equation}
	m'(z)=I+\int_{\Sigma'}\frac{((1-C_{w'})^{-1}I)w'(s)}{s-z}\frac{ds}{2\pi i}
	\end{equation}
	On the boundary we have a new RHP:
	\begin{problem}{RHP3}
		\begin{equation}
		\begin{cases}
		m'_+=m'_-e^{-it\theta\hat{\sigma_3}}v'(z),z\in \Sigma'\\
		m'(\infty)=I
		\end{cases}
		\end{equation}
		where
		\begin{eqnarray}
		w'&=&w'_++w'_-,\\
		b'_\pm&=&I\pm w'_\pm\\
		v'&=&(b'_-)^{-1}b'_+
		\end{eqnarray}
		from the definition of $w'$ we have
		\begin{equation}
		\begin{cases}
		b'_+=\begin{pmatrix}
		1& R\delta_+^2\\
		0&1\\
		\end{pmatrix},b'_-=\begin{pmatrix}
		1& 0\\
		0&1\\
		\end{pmatrix},z\in L'\\
		b'_+=\begin{pmatrix}
		1& 0\\
		0&1\\
		\end{pmatrix},b'_-=\begin{pmatrix}
		1& 0\\
		\bar{R}\delta_-^{-2}&1\\
		\end{pmatrix},z\in \overline{L'}
		\end{cases}
		\end{equation}
	\end{problem}

	\section{Reducing the phase function and separating the contributions}
	In this section we will show how to reduce the order of phase function and then separate the contributions from different stationary phase points. First we introduce some new notations. Split $\Sigma'$ into disjoint union of two crosses $\Sigma'_A\cup \Sigma'_B$, see Figure \ref{splitedCrosses}. And decompose $w'=w'\chi_{\Sigma'_A}+w'\chi_{\Sigma'_B}=:w^{A'}+w^{B'}$.
	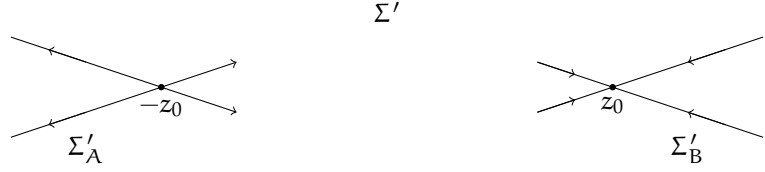
\begin{figure}[h]
		\centering
		\begin{tikzpicture}
		\node [below] at (3,0) {$z_0$};
		\node [below] at (-3,0) {$-z_0$};
		\filldraw (3,0) circle (1pt);
		\filldraw (-3,0) circle (1pt);
		\draw (2,1/3)--(5,-2/3);
		\draw [<-](-2,1/3)--(-5,-2/3);
		\draw (2,-1/3)--(5,2/3);
		\draw [<-](-2,-1/3)--(-5,2/3);
		\draw [<-] (4,1/3)--(4.5,1/2);
		\draw [<-] (4,-1/3)--(4.5,-1/2);
		\draw [->] (-4,1/3)--(-4.5,1/2);
		\draw [->] (-4,-1/3)--(-4.5,-1/2);
		\draw [->] (2,1/3)--(2.5,1/6);
		\draw [->] (2,-1/3)--(2.5,-1/6);
		\node [below] at (4,-1/2){$\Sigma'_B$};
		\node [below] at (-4,-1/2){$\Sigma'_A$};
		\node at (0,1) {$\Sigma'$};
		\end{tikzpicture}
		\caption{Splitting $\Sigma'$ into $\Sigma'_A$ and $\Sigma'_B$}\label{splitedCrosses}
	\end{figure}
	
	Define the Cauchy operators $A',B'$ on $\Sigma'$ by
	\begin{equation}
	A':=C_{w^{A'}},\quad B':=C_{w^{B'}}
	\end{equation}
	Extend the contours $\Sigma'_A$ and $\Sigma'_B$ by assigning 0 to $\hat{w}^{A'}$,$\hat{w}^{B'}$ to the contours 
	\begin{eqnarray}
	\hat{\Sigma}_{A'}&=&\{z=-z_0+z_0ue^{\pm i \alpha}:u\in\R\}\\
	\hat{\Sigma}_{B'}&=&\{z=z_0+z_0ue^{\pm i \alpha}:u\in\R\}\\
	\end{eqnarray}
	The associated operators on $\hat{\Sigma}_{A'}$,$\hat{\Sigma}_{B'}$ are denoted by $\hat{A}',\hat{B}'$. And denote the shifted contours by $\Sigma_A,\Sigma_B$ , which are $\{z=z_0ue^{i\pm \alpha},u\in\R\}$ oriented as $\hat{\Sigma}_{A'}$,$\hat{\Sigma}_{B'}$ respectively. 
	
	Introduce the shift and scaling operators, set $a=\sqrt{640tz_0^3}$:
	\begin{eqnarray}
	N_A(f(z))&:=&f(z/a-z_0)\\
	N_B(f(z))&:=&f(z/a+z_0)\\
	\end{eqnarray}
	In the following analysis we will focus on the contour $\Sigma_B$ and give the similar results for the contour $\Sigma_A$ without proofs.
	
	In fact, we have
	\begin{equation}
	N_B(\delta 
	e^{-it\theta})(z)=\delta_B^0(z)\delta_B^1(z)
	\end{equation}
	where
	\begin{equation}
	\delta_B^0(z)=e^{\chi(z_0)}e^{-it\theta(z_0)}(2z_0a)^{-i\nu}
	\end{equation}
	and
	\begin{equation}
	\delta_B^1(z_0)=z^{i\nu}(\frac{2z_0}{z/a+2z_0})^{i\nu}e^{\chi(z/a+z_0)-\chi(z_0)}e^{-iz^2/4[1+z/(az_0)+z^2/(2a^2z_0^2)+z^3/(10z_0^3a^3)]}
	\end{equation}
	
	Since the only difference between our concern and Deift-Zhou 93's work\cite{deift_steepest_1993} is the phase function, and the estimates for the Cauchy operator after shifting and scaling are completely based on the phase. By conjuate the matrices $\hat{w}^{B}_\pm$ on the contour $\bar{L}_B=\{z=uz_0ae^{i\alpha},-\epsilon<u<\infty\}$ and $L_B=\{z=uz_0ae^{-i\alpha},-\epsilon<u<\infty\}$ and with identity jumps for the rest of the contour, i.e. $\Sigma_B\backslash (L_B\cup \bar{L}_B)$, as a result we have
	\begin{eqnarray}
	N^0_B(\hat{w}_-^{B'})&:=&(\delta_B^0)^{\hat{\sigma_3}}N_B(\hat{w}_-^{B'})=\begin{pmatrix}
	0 & 0 \\
	-\bar{R}(z/a+z_0)(\delta_B^1)^{-2} &0\\
	\end{pmatrix}\\
	N^0_B(\hat{w}_+^{B'})&:=&(\delta_B^0)^{\hat{\sigma_3}}N_B(\hat{w}_+^{B'})=\begin{pmatrix}
	0 & R(z/a+z_0)(\delta_B^1)^{2} \\
	0&0\\
	\end{pmatrix}\\
	\end{eqnarray}
	
	As $t\rightarrow \infty$,
	\begin{equation}
	\bar{R}(z/a+z_0)(\delta_B^1)^{-2}-\bar{R}({z_0}{\pm})z^{-2\nu i}e^{iz^2/2}\rightarrow 0
	\end{equation}
	where
	\begin{equation}
	R(z_0\pm)=\lim_{z\rightarrow z_0\pm}R(z)=\lim_{z\rightarrow z_0\pm}\rho(z)=\{+:\bar{r}(z_0);-:-\bar{r}(z_0)(1-|r(z_0)|^2)^{-1}\}
	\end{equation}

	More specifically, for the contour $\Sigma_{B}$, we have the following estimate for the rate of convergence:

	\begin{lemma}{Analogous to Lemma 3.35 in \cite{deift_steepest_1993}.}\label{lemma:phasereduction}
		Let $\gamma$ be a small positive number that $\gamma<1/2$ and $t_0$ be some large number. Then for $z\in \bar{L}_B$,and for $t>t_0$,
		\begin{equation}
		\begin{split}
		\left|\bar{R}(z/a+z_0)(\delta_B^1)^{-2}-\bar{R}({z_0}{\pm})z^{-2\nu i}e^{iz^2/2}\right|\\
		\leq C(z_0)e^{-\gamma \Im{z^2/2}}\left(t^{-1/2}+t^{-1/2}\log{(t)}\right)
		\end{split}
		\end{equation}
	\end{lemma}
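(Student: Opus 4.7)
The plan is to write the difference as a telescoping sum of four pieces, each coming from replacing one sub-factor of $(\delta_B^1)^{-2}$ or of $\bar R(z/a+z_0)$ by its limiting form, and then to estimate each piece separately using the Gaussian envelope $e^{-\gamma\Im z^2/2}$ on $\bar L_B$ to absorb any polynomial growth in $z$. Concretely, writing
\begin{equation*}
(\delta_B^1)^{-2}=z^{-2i\nu}\Bigl(\tfrac{2z_0}{z/a+2z_0}\Bigr)^{-2i\nu}e^{-2(\chi(z/a+z_0)-\chi(z_0))}e^{iz^2/2}\,E(z,t),
\end{equation*}
with $E(z,t)=\exp\!\bigl(\tfrac{iz^2}{2}[\tfrac{z}{az_0}+\tfrac{z^2}{2a^2z_0^2}+\tfrac{z^3}{10a^3z_0^3}]\bigr)$, I would split the quantity to be estimated as
\begin{equation*}
\bigl[\bar R(z/a+z_0)-\bar R(z_0\pm)\bigr]\cdot(\delta_B^1)^{-2}+\bar R(z_0\pm)\bigl[(\delta_B^1)^{-2}-z^{-2i\nu}e^{iz^2/2}\bigr],
\end{equation*}
and split the second bracket further, by telescoping, into the three contributions: $E(z,t)-1$, the ratio $(2z_0/(z/a+2z_0))^{-2i\nu}-1$, and $e^{-2(\chi(z/a+z_0)-\chi(z_0))}-1$. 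Each telescoped factor is multiplied by a uniformly bounded remainder, so the task is to estimate each one individually.

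On $\bar L_B$ we have $z=uaz_0e^{i\alpha}$ with $u>-\epsilon$, so $\Im z^2=u^2a^2z_0^2\sin(2\alpha)$ and therefore $e^{-\gamma\Im z^2/2}=e^{-\gamma' a^2u^2}$ provides Gaussian decay at scale $1/a=O(t^{-1/2})$. This envelope is used to absorb any monomial $|z|^j=O((au)^j)$ appearing in the bounds. For the $\bar R$ difference, since $R$ is the rational Taylor truncation of $\bar\rho$ with $R(z_0\pm)$ well-defined and $R$ smooth on $\bar L_B$, a first-order Taylor estimate in the variable $z/a$ yields $|\bar R(z/a+z_0)-\bar R(z_0\pm)|\leq C|z|/a$; together with the envelope this contributes $O(a^{-1})=O(t^{-1/2})$. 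For the factor $E(z,t)-1$, using $|e^w-1|\leq|w|e^{|w|}$ with $|w|\leq C(|z|^3/a+|z|^4/a^2+|z|^5/a^3)$ and absorbing $|z|^{3,4,5}$ into $e^{-\gamma\Im z^2/2}$ costs another factor of $a^{-1}$, hence an $O(t^{-1/2})$ bound (with an arbitrarily small enlargement of $\gamma$, which is why $\gamma<1/2$ is assumed). For the power $(2z_0/(z/a+2z_0))^{-2i\nu}-1$, expanding $-2i\nu\log(1+z/(2az_0))$ to first order gives $O(|z|/a)$, and the envelope absorbs $|z|$.

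The only delicate piece, and the source of the $t^{-1/2}\log t$ in the statement, is the scalar RHP factor $e^{-2(\chi(z/a+z_0)-\chi(z_0))}-1$. From the Plemelj–Privalov estimate for $\chi$ (Lipschitz continuity of $\log(1-|r|^2)$ on $[-z_0,z_0]$, already invoked after \eqref{eq:1}), one gets, for the cut-based representation used in \eqref{eq:2}, a bound of the form
\begin{equation*}
|\chi(z/a+z_0)-\chi(z_0)|\leq C(z_0)\,\frac{|z|}{a}\Bigl(1+\bigl|\log\tfrac{|z|}{a}\bigr|\Bigr),
\end{equation*}
since $\chi$ has logarithmic modulus of continuity at the endpoint $z_0$ in general (the endpoint lies on the contour of the scalar RHP). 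Using $|e^w-1|\leq|w|e^{|w|}$ and the envelope to absorb $|z|$, this contributes $O(a^{-1}\log a)=O(t^{-1/2}\log t)$, which is the dominant term. The main obstacle is precisely this logarithmic loss at the endpoint $z_0$; controlling it cleanly requires using the explicit splitting from \eqref{eq:2} that separates the $\nu\log$ singularity from a Hölder/Lipschitz remainder, so that only the explicit logarithm survives. Summing the four contributions and multiplying by the common envelope $e^{-\gamma\Im z^2/2}$ yields the claimed bound, uniformly in $z\in\bar L_B$ for $t>t_0$.
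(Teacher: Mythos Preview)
Your telescoping decomposition into four pieces and your use of the Plemelj--Privalov-type estimate for $\chi$ (giving the $t^{-1/2}\log t$ contribution) match the paper's argument essentially line for line; the paper proves exactly the auxiliary bound $|\chi(z_0+z)-\chi(z_0)|\leq c|z|\,|\log|z||$ that you invoke.

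There is, however, one genuine gap in the $E(z,t)-1$ step. On $\bar L_B=\{z=uaz_0e^{i\alpha}:u>-\epsilon\}$ the parameter $u$ is unbounded, and the exponent $w=\tfrac{iz^2}{2}\bigl[\tfrac{z}{az_0}+\tfrac{z^2}{2a^2z_0^2}+\tfrac{z^3}{10a^3z_0^3}\bigr]$ satisfies $|w|\sim u^{5}a^{2}$ for large $u$. Your inequality $|e^{w}-1|\leq|w|e^{|w|}$ therefore produces a factor $e^{c\,u^{5}a^{2}}$, which no Gaussian envelope $e^{-\gamma'\Im z^{2}/2}=e^{-c'u^{2}a^{2}}$ can absorb; the claim that ``absorbing $|z|^{3,4,5}$ into $e^{-\gamma\Im z^{2}/2}$ costs another factor of $a^{-1}$'' fails for this reason. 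The point you are missing is a \emph{sign} computation, not a size estimate: the contour was chosen so that the full fifth-order phase has nonpositive real part. Concretely, the paper keeps the quadratic and the cubic correction together by writing the exponent as $i(1-2\gamma)\tfrac{z^{2}}{2}\xi$ with $\xi=1+(1-2\gamma)^{-1}[\tfrac{z}{az_0}+\cdots]$, and checks directly that
\[
\Re\Bigl(i(1-2\gamma)\tfrac{z^{2}}{2}\xi\Bigr)=-\tfrac{1}{2}u^{2}a^{2}z_0^{2}\bigl[(1-2\gamma)\sin 2\alpha+u\sin 3\alpha+\tfrac{u^{2}}{2}\sin 4\alpha+\tfrac{u^{3}}{10}\sin 5\alpha\bigr]\leq 0
\]
for $\alpha$ small and $u>-\epsilon$. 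This gives $\sup_{s\in[0,1]}\bigl|e^{i(1-2\gamma)z^{2}\xi(s)/2}\bigr|\leq C$ uniformly, so the mean-value estimate for $e^{i(1-2\gamma)z^{2}\xi/2}-e^{i(1-2\gamma)z^{2}/2}$ yields $C|z/a|$ with no dangerous exponential factor, and then the remaining $e^{i\gamma z^{2}/2}$ absorbs the polynomial to produce $O(t^{-1/2})$. Once you insert this sign analysis in place of the crude $|e^{w}-1|\leq|w|e^{|w|}$, your sketch becomes a correct proof identical in spirit to the paper's.
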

	\begin{lemma} \label{lemma:ppt}
		Let $f(s)=\log{\frac{1-|r(s)|^2}{1-|r(z_0)|^2}}  $, $\chi{(z)}=\frac{1}{2\pi i}\int_{-z_0}^{z_0}\frac{f(s)}{s-z}ds$, where $r$ is the reflection coefficient which is in schwartz space. Then for $z\in L_0= \{z=ue^{i\alpha},|u|<1\}$,we have
		\begin{equation}
		\begin{split}
		|\chi{(z_0+z)}-\chi{(z_0)}|&\leq c|z||\log{|z|}|\\
		\end{split}
		\end{equation}
		
	\end{lemma}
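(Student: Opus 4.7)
The plan is to factor the quantity $z$ out of the difference $\chi(z_0+z)-\chi(z_0)$ explicitly, and then to exploit the vanishing of $f$ at $z_0$ to absorb the remaining $1/(s-z_0)$ singularity at the endpoint. Concretely, the elementary identity
\begin{equation*}
\frac{1}{s-z_0-z}-\frac{1}{s-z_0}=\frac{z}{(s-z_0-z)(s-z_0)}
\end{equation*}
gives
\begin{equation*}
\chi(z_0+z)-\chi(z_0)=\frac{z}{2\pi i}\int_{-z_0}^{z_0}\frac{f(s)}{(s-z_0-z)(s-z_0)}\,ds.
\end{equation*}

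Next I would use that $f$ is smooth on $[-z_0,z_0]$ (because $r\in\mathcal{S}(\R)$ with $|r|<1$ keeps $1-|r|^2$ bounded away from $0$) and that $f(z_0)=0$ by construction. Hence $f(s)=(s-z_0)g(s)$ with $g\in C^\infty([-z_0,z_0])$, and in particular $g$ is Lipschitz on the interval. The displayed integral collapses to
\begin{equation*}
\chi(z_0+z)-\chi(z_0)=\frac{z}{2\pi i}\int_{-z_0}^{z_0}\frac{g(s)}{s-z_0-z}\,ds,
\end{equation*}
which is the heart of the estimate.

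To bound the remaining integral I would split it as
\begin{equation*}
\int_{-z_0}^{z_0}\frac{g(s)-g(z_0)}{s-z_0-z}\,ds+g(z_0)\int_{-z_0}^{z_0}\frac{ds}{s-z_0-z}.
\end{equation*}
The first term is uniformly $O(1)$: for $z\in L_0$ off the real axis the difference quotient $(g(s)-g(z_0))/(s-z_0-z)$ is dominated by the Lipschitz constant of $g$ up to a harmless multiplicative factor depending only on $\alpha$, since $|s-z_0-z|\ge c_\alpha|s-z_0|$ uniformly in $s\in[-z_0,z_0]$ and $z\in L_0$. The second term evaluates explicitly to $\log(-z)-\log(-2z_0-z)$ with the branch chosen in the scalar RHP section, which is bounded by $|\log|z||+C$ for $|z|$ small. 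Combining the two parts yields $|\chi(z_0+z)-\chi(z_0)|\le C|z|(1+|\log|z||)\le c|z||\log|z||$ for $|z|$ sufficiently small.

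The only point that requires care, and which I expect to be the main obstacle to writing cleanly, is the uniform lower bound $|s-z_0-z|\ge c_\alpha|s-z_0|$ needed to make the Lipschitz argument work when $z$ lies on the oblique ray through the origin rather than in, say, a transverse cone at $z_0$; this is where the hypothesis $\alpha\in(0,\pi/4]$ (so the ray $L_0$ stays away from the real interval of integration) enters in an essential way.
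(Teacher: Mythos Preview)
Your argument is correct and follows the same overall strategy as the paper: factor out $z$, use $f(z_0)=0$ to neutralize the $1/(s-z_0)$ singularity, and control the remaining Cauchy-type integral via the angle $\alpha$. The execution differs in one technical respect worth noting. The paper passes to absolute values immediately: from $|f(s)|\le C|s-z_0|$ (Lipschitz only) it bounds the integrand by $C/|s-z_0-z|$ and then estimates $\int_{-2z_0}^{0}\frac{ds}{|s-z|}$ directly using the elementary lower bound $|s-z|\ge\tfrac{\sin\alpha}{2}(|s|+|z|)$, which integrates to $\tfrac{2}{\sin\alpha}\log(1+2z_0/|z|)$. Your route instead spends one more derivative, writing $f(s)=(s-z_0)g(s)$ with $g$ Lipschitz, and splits off $g(z_0)$ so that the logarithm appears as an explicitly computed antiderivative rather than as an estimate. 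This buys a cleaner decomposition (the $\log|z|$ is isolated in closed form, and the remainder is $O(1)$) at the cost of needing $f\in C^{1,1}$ rather than merely Lipschitz; since $r\in\mathcal S$ both hypotheses are available and the trade-off is cosmetic. The geometric lower bound you flag at the end, $|s-z_0-z|\ge c_\alpha|s-z_0|$, is real and holds with $c_\alpha=\sin\alpha$ by the same triangle-in-the-complex-plane computation that underlies the paper's bound.
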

	\begin{proof}
		Since $r$ is a Schwartz function, it is trivial to show that $f(s)$ is Lipschtz. And $f(z_0)=0$, so $|f(s)|=|f(s)-f(z_0)|\leq C|s-z_0|$, where $C$ is independent from $z_0,s$.
		
		Now write
		\begin{equation}
		\begin{split}
		|\chi(z+z_0)-\chi(z_0)|&\leq \frac{|z|}{2\pi}\int_{-z_0}^{z_0}\frac{|f(s)|ds}{|(s-z-z_0)(s-z_0)|}\\
		&\leq \frac{C|z|}{2\pi}\int_{-z_0}^{z_0}\frac{ds}{|s-z-z_0|}\\
		&\leq \frac{C|z|}{2\pi}\int_{-2z_0}^{0}\frac{ds}{|s-z|}
		\end{split}
		\end{equation}
		Since $|s-z|\geq 1/2(-s \sin{(\alpha)}+|z|\sin(\alpha)),\forall z\in L_0,s\in [-2z_0,0]$, we have
		\begin{equation}
		\begin{split}
		\int_{-2z_0}^0\frac{ds}{|s-z|}&\leq 2\int_{-2z_0}^0\frac{ds}{-s \sin{(\alpha)}+|z|\sin(\alpha) }\\
		&=\frac{2}{\sin \alpha}\int_{-2z_0}^0\frac{ds}{-s+|z|}\\
		&=\frac{2}{\sin \alpha}\int_{|z|}^{|z|+2z_0}\frac{ds}{s}\\
		&\leq \frac{2}{\sin \alpha}\log{(1+\frac{2z_0}{|z|})}\\
		&\leq C\left|\log|z|\right|
		\end{split}
		\end{equation}	
		Combining above analysis, lemma is proved.
	\end{proof}
	\begin{remark}
		In fact the above lemma is the direct conclusion from the Plemelj-Privalov theorem.
	\end{remark}
	\begin{proof}{The main lemma}
		
		Write
		\begin{equation}\label{phasereduceinequality}
		\begin{split}
		\bar{R}(z/a+z_0)&(\delta_B^1)^{-2}-\bar{R}({z_0}{\pm})z^{-2\nu i}e^{iz^2/2}\\
		=(e^{i\gamma z^2/2})&\left(e^{i\gamma z^2/2}\left[\bar{R}(z/a+z_0)(\frac{2z_0}{z/a+2z_0})^{-2i\nu}z^{-2i\nu}\right.\right.\\
		&\left.\left.e^{i(1-2\gamma)z^2/2\xi}e^{-2[\chi{(z/a+z_0)}-\chi{(z_0)}]}-\bar{R}({z_0}{\pm})z^{-2\nu i}e^{i(1-2\gamma)z^2/2}\right]\right)
		\end{split}
		\end{equation}
		where $\xi=1+(1-2\gamma)^{-1}(z/(az_0)+z^2/(2a^2z_0^2)+z^3/(10z_0^3a^3)).$
		Each terms in \eqref{phasereduceinequality} is uniformly bounded with respect to $x<0,t>0$. $|e^{i\gamma z^2/2}|=e^{-\gamma z_0^2u^2a^2 sin(\alpha)}$ is trivially bounded provided that $\alpha<\pi/2$, so is $|e^{i(1-2\gamma )z^2/2}|$. Applying the decomposition lemma, $|\bar{R}(z/a+z_0)|\leq c/(1+z_0^2)$. Also
		\begin{equation}
		\begin{split}
		\sup_{-\epsilon<u<\infty}&|(\frac{2z_0}{z/a+2z_0})^{-2\nu i}|\\
		&=\sup_{-\epsilon<u<\infty} e^{-2\nu arg(1+u/2e^{i\alpha})}\leq C
		\end{split}
		\end{equation}
		as $\arg(1+u/2e^{i\alpha})$ is positive when $u>0$ and $0<\nu<-1/(2\pi)\log(1-\eta^2)\leq \infty$ provided that $|r(z)|\leq \eta<1.$.
		The term $e^{i(1-2\gamma)z^2/2\xi}$ is bounded as
		\begin{equation}
		\begin{split}
		\Re&{i(1-2\gamma)z^2/2\xi}\\
		&=\Re i(1-2\gamma)1/2u^2a^2z_0^2e^{i2\alpha}(1+(1-2\gamma)^{-1}(ue^{i\alpha}+1/2u^2e^{2i\alpha}+1/10u^3e^{3i\alpha}))\\
		&=-1/2u^2a^2z_0^2(\sin(2\alpha)+u \sin(3\alpha)+1/2u^2 \sin(4\alpha)+1/10u^3 \sin(5\alpha))
		\end{split}
		\end{equation}
		is negative when $\alpha<\pi /5$ and $u$ goes to infinity, so $|e^{i(1-2\gamma)z^2/2\xi}|$ is bounded. Finally, due to lemma \eqref{lemma:ppt}, $e^{-2\{\chi(z/a+z_0)-\chi(z_0)\}}$ is bounded.
		
		Now we have
		\begin{equation}
		\begin{split}
		|e^{i\gamma z^2/2}(\bar{R}(z/a+z_0)-\bar{R}(z_0\pm))|&\leq e^{\Re(i\gamma z^2/2)}\|\bar{R}'\|_{\infty}|z/a|\\
		&\leq c(tz_0^3)^{-1/2}\\
		\end{split}
		\end{equation}
		and
		\begin{equation}
		\begin{split}
		&\left|e^{i\gamma z^2/2}\left((\frac{2z_0}{z/a+2z_0})^{-2i\nu}-1\right)\right|\\
		&=\left|e^{i\gamma z^2/2}\int_{1}^{1+z/(2az_0)}(2i\nu)u^{2i\nu-1}du\right|\\
		&\leq |e^{i\gamma z^2/2}||z/(2az_0)|\sup_{u=1+\frac{sz}{2az_0},0\leq s\leq 1}|u^{2i\nu-1}|
		\end{split}
		\end{equation}
		Since 
		\begin{equation}
		\begin{split}
		|u^{2i\nu-1}|&=|e^{(2i\nu-1)(\log|u|+i \arg(u))}|\\
		&=e^{-\log|u|}e^{-2\nu \arg(u)}\\
		&=\frac{e^{-2\nu \arg(u)}}{|u|}
		\end{split}
		\end{equation}
		And it is easy to check that $|u|\geq \sin(\alpha)$ and since $\arg(u)$ is bounded, so $$ \sup_{u=1+\frac{sz}{2az_0},0\leq s\leq 1}|u^{2i\nu-1}|$$ is uniformly bounded with respect to $x,t$ and thus
		\begin{equation}
		\left|e^{i\gamma z^2/2}\left((\frac{2z_0}{z/a+2z_0})^{-2i\nu}-1\right)\right|\leq C(tz_0^3)^{-1/2}.
		\end{equation}
		Next write
		\begin{equation}
		\begin{split}
		&\left|e^{i\gamma z^2/2}\left(e^{-2(\chi(z/a+z_0)-\chi(z_0))}-1\right)\right|\\
		&\leq \sup_{0\leq s\leq 1}|e^{-2s(\chi(z/a+z_0)-\chi(z_0))}||2e^{i\gamma z^2/2}(\chi(z/a+z_0)-\chi(z_0))|\\
		&\leq C|e^{i\gamma z^2/2}||z/a||\log|z/a||\\
		&\leq C\frac{\log(tz_0^3)}{(tz_0^3)^{1/2}}
		\end{split}
		\end{equation}
		And finally we have
		\begin{equation}
		\begin{split}
		&|e^{i\gamma z^2/2}z^{-2\nu i}(e^{i(1-2\gamma)(z^2/2)\xi}-e^{i(1-2\gamma)z^2/2})|\\
		&\leq c|e^{i\gamma z^2/2}||z/a|\sup_{0\leq s\leq 1}|\frac{d}{ds}e^{i(1-2\gamma)z^2/2\xi(z;s)}|\\
		&\leq C(tz_0^3)^{-1/2}
		\end{split}
		\end{equation}
		Thus combining above estimates yields the expecting lemma. The rapid decay of $C(z_0)$ comes from the decomposition lemma.
	\end{proof}
	\begin{remark}
		By similar analysis, we have
		\begin{equation}
		\begin{split}
		\left|R(z/a+z_0)(\delta_B^1)^{2}-R({z_0}{\pm})z^{2\nu i}e^{-iz^2/2}\right|\\
		\leq C(z_0)e^{-\gamma \Im{z^2/2}}\left(t^{-1/2}+t^{-1/2}\log{(t)}\right)
		\end{split}
		\end{equation}
		on $L_B$.

	\end{remark}
	
	Moreover, on the contour $\Sigma_A$, there are similar estimates for $\bar{L}_A$ and $L_A$ too, which are
	\begin{equation}
	(N_A\delta e^{-it\theta})
	=\delta_A^0\delta_A^1	\end{equation}
	where

	\begin{equation}
	\begin{split}
	\delta_A^0(z)&= e^{\chi(-z_0)}e^{-it\theta(-z_0)}(2az_0)^{i\nu}\\
	\delta_A^1(z)&=
	e^{iz^2/4(1-z/(az_0)+z^2/(2a^2z_0^2)-z^3/(10a^3z_0^3))}\\
	&e^{\chi(z/a-z_0)-\chi(-z_0)}(-z)^{-i\nu}\left(\frac{-2z_0}{z/a-2z_0}\right)^{-i\nu}\\
	\end{split}
	\end{equation}
	And the analogs of lemma\eqref{lemma:phasereduction} are 
	\begin{equation}\label{eq:lemma for A 1}
	\begin{split}
	\left|\bar{R}(z/a-z_0)(\delta_A^1)^{-2}-\bar{R}((-z_0){\pm})(-z)^{2\nu i}e^{-iz^2/2}\right|\\
	\leq C(z_0)e^{-\gamma \Im{z^2/2}}\left(t^{-1/2}+t^{-1/2}\log{(t)}\right)
	\end{split}
	\end{equation}
	for $z\in \bar{L}_A$,and
	\begin{equation}\label{eq:lemma for A 2}
	\begin{split}
	\left|R(z/a-z_0)(\delta_A^1)^{2}-R((-z_0){\pm})(-z)^{-2\nu i}e^{iz^2/2}\right|\\
	\leq C(z_0)e^{-\gamma \Im{z^2/2}}\left(t^{-1/2}+t^{-1/2}\log{(t)}\right)
	\end{split}
	\end{equation}
	for $z\in L_A$.
	
	Then follow the same analysis  in DZ93\cite{deift_steepest_1993} we arrive at the following proposition:
	\begin{prop}
		\label{prop:split contributions}
		\begin{equation}
		\begin{split}
		q(x,t)&=\left[-2\int_{\Sigma_{A'}}((1-C_{w^{A'}})^{-1}I)w^{A'}(s)\frac{ds}{\pi}\right]_{12}\\
		&+\left[-2\int_{\Sigma_{B'}}((1-C_{w^{B'}})^{-1}I)w^{B'}(s)\frac{ds}{\pi}\right]_{12}\\
		&+O(t^{-k})+O(\frac{c(z_0)}{t}),\forall k\in \N
		\end{split}
		\end{equation}
		as $t\rightarrow \infty$
	\end{prop}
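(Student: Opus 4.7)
The plan is to adapt the Deift--Zhou separation-of-contributions argument \cite{deift_steepest_1993} to our setting. The starting point is the previous proposition, which states
\begin{equation*}
q(x,t) = -\frac{2}{\pi}\left[\int_{\Sigma'} ((1 - C_{w'})^{-1} I)(s)\, w'(s)\, ds\right]_{12} + O(t^{-k}).
\end{equation*}
Since the supports of $w^{A'}$ and $w^{B'}$ are disjoint, we may write $C_{w'} = A' + B'$ and $w' = w^{A'} + w^{B'}$, and the task is to show that the interaction between the two crosses $\Sigma'_A$ and $\Sigma'_B$ contributes only $O(c(z_0)/t)$.

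First I would apply the resolvent identity
\begin{equation*}
(1 - A' - B')^{-1} = (1 - A')^{-1} + (1 - A')^{-1} B'(1 - A' - B')^{-1}
\end{equation*}
to $I$ and restrict integration against $w^{A'}$ to $\Sigma'_A$. This produces the desired leading contribution $\int_{\Sigma'_A}((1-A')^{-1}I)w^{A'}\,ds$ plus a remainder $E_A := \int_{\Sigma'_A}((1-A')^{-1} B'(1-C_{w'})^{-1} I)(s)\, w^{A'}(s)\,ds$. Interchanging the roles of $A'$ and $B'$ handles the integral over $\Sigma'_B$ analogously, yielding a remainder $E_B$.

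The key geometric observation is that $\mathrm{dist}(\Sigma'_A,\Sigma'_B) \geq 2z_0$, so the Cauchy kernel $1/(s-z)$ is genuinely non-singular whenever $s$ and $z$ lie on different crosses and is bounded there by $1/(2z_0)$. Consequently, for $z$ near $\Sigma'_A$ the operator $B'$ behaves as a smooth integration against $w^{B'}$ rather than as a principal-value operator, and Cauchy--Schwarz gives
\begin{equation*}
|B'f(z)| \leq \frac{1}{4\pi z_0}\,\|f\|_{L^2(\Sigma'_B)}\,\|w^{B'}\|_{L^2(\Sigma'_B)}.
\end{equation*}
Using the $L^2$ bounds $\|w^{A'}\|_{L^2}, \|w^{B'}\|_{L^2} = O(t^{-1/4})$ from Lemma~\ref{truncatelemma}, together with the uniform $L^{\infty}\!+\!L^2 \to L^2$ boundedness of $(1-C_{w'})^{-1}$ established in the previous section, a direct computation bounds $|E_A|$ and $|E_B|$ by $C\,c(z_0)/t$, where $c(z_0)$ gathers the rapidly decaying $z_0$-dependence inherited from the Schwartz class through the decomposition lemma.

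The main obstacle is purely bookkeeping: one has to track the $z_0$-weights through each of the cross estimates (the $1/z_0$ from the kernel, the polynomial-in-$z_0$ factors from the scaled decomposition of $R$, and the $t^{-1/2}$ gained by applying the truncation lemma twice) so that the final error has the stated form $c(z_0)/t$ with $c(z_0)$ rapidly decaying. Apart from this accounting, the argument is insensitive to the explicit fifth-order phase, since the phase enters only through the $L^2$ norm of $w'$, which has already been controlled in Lemma~\ref{truncatelemma}.
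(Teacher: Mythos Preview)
Your resolvent-identity strategy is precisely the Deift--Zhou separation argument the paper invokes (the paper itself gives no details here and simply cites \cite{deift_steepest_1993}), so the plan is right. Two points, however, need sharpening.

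First, a minor geometric correction: the inner rays of each cross extend a distance of order $\epsilon z_0$ toward the origin, so one only has $\operatorname{dist}(\Sigma'_A,\Sigma'_B)\ge c\,z_0$ for some constant $c<2$ depending on $\epsilon$ and $\alpha$, not $\ge 2z_0$. This is harmless for the estimate.

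Second, and more substantive: your displayed bound $|B'f(z)|\le \tfrac{1}{4\pi z_0}\|f\|_{L^2(\Sigma'_B)}\|w^{B'}\|_{L^2}$ cannot be applied to $f=\mu=(1-C_{w'})^{-1}I$, because $\mu\notin L^2$. If you split $\mu=I+g$ with $\|g\|_{L^2}=O(t^{-1/4})$ and use only the $L^2$ input $\|w'\|_{L^2}=O(t^{-1/4})$ from Lemma~\ref{truncatelemma}, the best you obtain for $E_A$ is $O(t^{-1/2})$, not $O(t^{-1})$: the term $B'I$ contributes $\|B'I\|_{L^2}\le C\|w^{B'}\|_{L^2}=O(t^{-1/4})$, and one more factor of $\|w^{A'}\|_{L^2}$ gives only $t^{-1/2}$. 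The ingredient you are missing is the $L^1$ estimate $\|w'\|_{L^1(\Sigma')}=O(t^{-1/2})$, which follows from the same Gaussian factor $e^{-c(\alpha)z_0^5u^2t}$ used in the proof of Lemma~\ref{truncatelemma}. With it, the non-singular kernel gives $|B'\mu(z)|\le \tfrac{C}{z_0}\|\mu w^{B'}\|_{L^1}\le Ct^{-1/2}/z_0$ uniformly on $\Sigma'_A$, and pairing this $L^\infty$ bound against $\|w^{A'}\|_{L^1}=O(t^{-1/2})$ (plus a routine $L^2$ estimate for the piece $(1-A')^{-1}A'B'\mu$) yields the claimed $O(c(z_0)/t)$.
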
 
	
	\section{Model RHP}
	In this section, we will transform the argumented RHP to a RHP on the real with a jump does not depend on $z$. Then by Louville's argument, we can solve the RHP explicitly and represented by solutions of the parabolic-cylinder equation. First we introduce some new notations following Deift-Zhou's method.
	Let $\hat{A}'=C_{\hat{w}^{A'}}:L^2(\hat{\Sigma}_{A'})\rightarrow L^2(\hat{\Sigma}_{A'})$ and let $\tilde{\Delta}_A^0:L^2(\hat{\Sigma}_{A'})\rightarrow L^2(\hat{\Sigma}_{A'})$ as the right multiple by $(\delta_A^0)^{\sigma_3}$. Then after shifting and rescaling, the new operator denotes $A:=C_{w^A}:L^2(\Sigma_{A})\rightarrow L^2(\Sigma_{A})$, where $w^A=(\Delta_A^0)^{-1}(N_A\hat{w}^{A'})\Delta_A^0$, and it has the relation with $\hat{A}'$:
	\begin{equation}
	\hat{A}'=N_A^{-1}(\tilde{\Delta}_A^0)^{-1}A\tilde{\Delta}_A^0N_A
	\end{equation}
	On the contour $\Sigma_A$, see figure \eqref{fig:sigma_A}, we have the RHP data for $A$ :$w^A=w^A_++w^A_-$, where $w^A_+=\begin{pmatrix}
	0 & (N_AR)(\delta_A^1)^2\\
	0& 0\\
	\end{pmatrix}$ and $w^A_-=\begin{pmatrix}
	0 & 0\\
	-(N_A\bar{R})(\delta_A^1)^2&0\\
	\end{pmatrix}$. Then base on the lemma \eqref{eq:lemma for A 1} and lemma \eqref{eq:lemma for A 2}, we have the RHP data for $A^0$. Set $v^{A^0}=(b_-^{A^0})^{-1}b_+^{A^0}=(I-w_-^{A^0})^{-1}(I+w_+^{A^0})$, where we define $w^{A^0}$ according to \eqref{eq:lemma for A 1} and \eqref{eq:lemma for A 2}, as
	\begin{eqnarray}
	w^{A^0}&=w_+^{A^0}=\begin{pmatrix}
	0 & R((-z_0)+)(-z)^{-2\nu i}e^{iz^2/2}\\
	0 & 0\\
	\end{pmatrix}\chi_{\{z\in \Sigma_A^2\}}\\
	&+\begin{pmatrix}
	0 & R((-z_0)-)(-z)^{-2\nu i}e^{iz^2/2}\\
	0 & 0\\
	\end{pmatrix}\chi_{\{z\in \Sigma_A^4\}}\\
	&=w^{A^0}_-=\begin{pmatrix}
	0 & 0\\
	-\bar{R}((-z_0)+)(-z)^{2\nu i}e^{-iz^2/2}&0\\
	\end{pmatrix}\chi_{\{z\in \Sigma_A^1\}}\\
	&+\begin{pmatrix}
	0 & 0\\
	-\bar{R}((-z_0)-)(-z)^{2\nu i}e^{-iz^2/2}&0\\
	\end{pmatrix}\chi_{\{z\in \Sigma_A^3\}}
	\end{eqnarray}
	where
	\begin{eqnarray}
	R((-z_0)+)&=\lim_{z\rightarrow (-z_0)+}\rho(z)=\frac{r(-z_0)}{1-|r(-z_0)|^2}\\
	R((-z_0)-)&=\lim_{z\rightarrow (-z_0)-}\rho(z)=-r(-z_0)
	\end{eqnarray}
	\begin{figure}[h]
		\centering
		\begin{tikzpicture}[scale=0.6]
		\draw [<->](-2,-2)--(2,2);
		\draw (-3,-3)--(3,3);
		\draw [<->](-2,2)--(2,-2);
		\draw (-3,3)--(3,-3);
		\node [below] at (0,0) {0};
		\node  [right] at (3,3) {$\Sigma_A^2$};
		\node  [right] at (3,-3) {$\Sigma_A^1$};
		\node  [left] at (-3,3) {$\Sigma_A^3$};
		\node  [left] at (-3,-3) {$\Sigma_A^4$};
		\end{tikzpicture}
		
		\caption{Oriented contour $\Sigma_A$}
		\label{fig:sigma_A}
	\end{figure}
	Next we will show how to approximate the RHP data $w^{A'}$ by the data $w^{A^0}$. In fact, applying the restriction lemma and by changing variables, we can show that
	\begin{equation}
	\begin{split}
	\int_{\Sigma_{A'}}&((1-C_{w^{A'}})^{-1}I)w^{A'}(\xi)d\xi =\int_{\Sigma_{\hat{A}'}}((1-C_{\hat{w}^{A'}})^{-1}I)\hat{w}^{A'}(\xi)d\xi\\
	&=\int_{\Sigma_{\hat{A}'}}(N_A^{-1}(\tilde{\Delta}_A^0)^{-1}(1-A)^{-1}\tilde{\Delta}_A^0 N_AI)(\xi)\hat{w}^{A'}(\xi)d\xi\\
	&=\frac{1}{a}\int_{\Sigma_A}((1-A)^{-1}\Delta_A^0)(\xi)(\Delta_A^0)^{-1}(N_A\hat{w}^{A'})(\xi)\Delta_A^0(\Delta_A^0)^{-1}d\xi\\
	&=\frac{1}{a}\Delta_A^0\int_{\Sigma_A}((1-A)^{-1}I)w^A(\xi)d\xi (\Delta_A^0)^{-1}\\
	&=\frac{1}{a}\Delta_A^0\int_{\Sigma_A}((1-A^0)^{-1}I)w^{A^0}(\xi)d\xi (\Delta_A^0)^{-1}+\frac{1}{a}O(t^{-1/2}+\frac{\log(t)}{t^{1/2}})
	\end{split}
	\end{equation}
	And combining with Proposition \eqref{prop:split contributions}, we have 
	\begin{prop}
		\label{prop:reduce to model rhp}
		\begin{equation}
		\begin{split}
		q(x,t)=&[\frac{-2}{a}\Delta_A^0\int_{\Sigma_A}((1-A^0)^{-1}I)w^{A^0}(\xi)\frac{d\xi}{\pi} (\Delta_A^0)^{-1}]_{12}\\
		&[\frac{-2}{a}\Delta_B^0\int_{\Sigma_B}((1-B^0)^{-1}I)w^{B^0}(\xi)\frac{d\xi}{\pi} (\Delta_B^0)^{-1}]_{12}\\
		&+O(t^{-k})+O(t^{-1}+\frac{\log(t)}{t}),\forall k\in \N,k>2
		\end{split}
		\end{equation}
		as $t\rightarrow \infty$
	\end{prop}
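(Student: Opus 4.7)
The plan is to chain together: (i) the already-established Proposition \ref{prop:split contributions}, which reduces $q(x,t)$ to two integrals against $(1-C_{w^{A'}})^{-1}I$ and $(1-C_{w^{B'}})^{-1}I$ over $\Sigma_{A'}$ and $\Sigma_{B'}$; (ii) the sequence of change-of-variable identifications $\Sigma_{A'}\to\hat{\Sigma}_{A'}\to \Sigma_A$ (extension of the contour followed by shift--and--scale by $N_A$); (iii) a conjugation by the right-multiplication operator $\tilde{\Delta}_A^0$; and (iv) the replacement of the RHP data $w^A$ by its asymptotic model $w^{A^0}$ using Lemma \ref{lemma:phasereduction} and its analogues \eqref{eq:lemma for A 1}--\eqref{eq:lemma for A 2}.

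For step (ii), the restriction lemma of Deift--Zhou allows the contour $\Sigma_{A'}$ to be extended to $\hat{\Sigma}_{A'}$ by assigning $0$ to $\hat{w}^{A'}$ on the complement, without changing the integral. The scaling $N_A$ then acts on functions by $f\mapsto f(z/a-z_0)$; since $\int d\xi$ picks up a Jacobian $1/a$, and since $(1-\hat{A}')^{-1}I=N_A^{-1}(\tilde{\Delta}_A^0)^{-1}(1-A)^{-1}\tilde{\Delta}_A^0 N_A I$ (where $N_A I=I$), one obtains
\begin{equation*}
\int_{\Sigma_{A'}}((1-C_{w^{A'}})^{-1}I)w^{A'}(\xi)\,d\xi
=\frac{1}{a}\Delta_A^0\!\int_{\Sigma_A}((1-A)^{-1}I)w^{A}(\xi)\,d\xi\,(\Delta_A^0)^{-1}.
\end{equation*}
For step (iv), the phase-reduction lemmas give $\|w^A-w^{A^0}\|_{L^2(\Sigma_A)\cap L^\infty(\Sigma_A)}\le C(z_0)(t^{-1/2}+t^{-1/2}\log t)$ once the extra Gaussian factor $e^{-\gamma\Im z^2/2}$ is absorbed against $dz$ on the rays of $\Sigma_A$. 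Combined with uniform $L^2\to L^2$ boundedness of $(1-A)^{-1}$ and $(1-A^0)^{-1}$ for $t$ large (which follows as in the proof of the truncation proposition, since $\|C_{w^A}\|$ and $\|C_{w^{A^0}}\|$ are $O(t^{-1/4})$), the resolvent identity
\begin{equation*}
(1-A)^{-1}-(1-A^0)^{-1}=(1-A)^{-1}(A-A^0)(1-A^0)^{-1}
\end{equation*}
together with $\|(A-A^0)I\|_{L^2}\le C\|w^A-w^{A^0}\|_{L^2}$ yields a contribution of size $O(t^{-1/2}+t^{-1/2}\log t)$ in the $L^2$ norm. Pairing this with $w^{A^0}$ (whose $L^2$ norm is itself $O(t^{-1/4})$, uniformly) and with the remainder pairing with $w^A-w^{A^0}$, all cross terms accumulate to an absolute error $O(t^{-1}\log t)$ once the explicit $1/a=1/\sqrt{640tz_0^3}$ prefactor is brought in front. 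The analogous argument near $z_0$ produces the $B$-contribution, and Proposition \ref{prop:split contributions} provides the $O(t^{-k})$ remainder from the preceding steps.

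The main obstacle is the careful control of the $L^2$ estimate $\|w^A-w^{A^0}\|_{L^2(\Sigma_A)}\le C(t^{-1/2}+t^{-1/2}\log t)$. The pointwise bound from Lemma \ref{lemma:phasereduction} and \eqref{eq:lemma for A 1}--\eqref{eq:lemma for A 2} carries a factor $e^{-\gamma\Im z^2/2}$; one must verify that the $L^2$ norm along each ray $z=z_0ua e^{\pm i\alpha}$ of $\Sigma_A$ (with $u\in\R$) of this Gaussian gives an $O(1)$ constant independent of $t,z_0$, so that the full $L^2$ bound is exactly the pointwise bound. Since $dz$ on the ray is $z_0 a\,du$ but the exponent $-\gamma\Im z^2/2\sim -c\,u^2 z_0^2 a^2$ produces a Gaussian of width $\sim 1/(z_0 a)$, these cancel and the $L^2$ norm is $O(t^{-1/2}+t^{-1/2}\log t)$ as required. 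The $1/a$ prefactor from the change of variables then promotes this to $O(t^{-1}+t^{-1}\log t)$ in the final estimate for $q(x,t)$, and the restriction $k>2$ is imposed so that the $O(t^{-k})$ remainder from the previous step is dominated.
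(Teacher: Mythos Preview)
Your overall strategy—restriction lemma, shift--scale by $N_A$, conjugation by $\tilde{\Delta}_A^0$, then replace $w^A$ by $w^{A^0}$ via the resolvent identity and Lemma~\ref{lemma:phasereduction}—is exactly the route the paper takes, and your change-of-variables identity and the $L^2$ control of $w^A-w^{A^0}$ are both correct.

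There is, however, a genuine slip in your justification of the uniform boundedness of $(1-A)^{-1}$ and $(1-A^0)^{-1}$. You claim this ``follows as in the proof of the truncation proposition, since $\|C_{w^A}\|$ and $\|C_{w^{A^0}}\|$ are $O(t^{-1/4})$''. After the scaling by $a=\sqrt{640tz_0^3}$, this is no longer true: the data $w^{A^0}$ are \emph{independent of $t$}, so $\|w^{A^0}\|_{L^\infty(\Sigma_A)}$ and $\|w^{A^0}\|_{L^2(\Sigma_A)}$ are $O(1)$, not $O(t^{-1/4})$, and the Neumann-series argument from the truncation step does not apply directly on $\Sigma_A$. (The smallness $\|w'\|_{L^2}=O(t^{-1/4})$ lived on the \emph{unscaled} contour; the rescaling $N_A$ multiplies $L^2$ norms by $\sqrt{a}\sim t^{1/4}$, which exactly cancels it.) The same remark applies to your later parenthetical ``whose $L^2$ norm is itself $O(t^{-1/4})$''.

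The fix is short but should be stated. Since $\hat{A}'=N_A^{-1}(\tilde{\Delta}_A^0)^{-1}A\,\tilde{\Delta}_A^0 N_A$ and both $N_A$ and $\tilde{\Delta}_A^0$ are bounded invertible operators on $L^2$ with bounded inverses (indeed $|\delta_A^0|$ is bounded above and below, and $N_A$ is a rescaling), the operator norms satisfy $\|(1-A)^{-1}\|\le C\|(1-\hat{A}')^{-1}\|$, and the latter is uniformly bounded for large $t$ because $\|\hat{A}'\|=O(t^{-1/4})$. For $(1-A^0)^{-1}$ one then argues either by perturbation from $(1-A)^{-1}$, using $\|A-A^0\|\le C\|w^A-w^{A^0}\|_{L^\infty}=O(t^{-1/2}\log t)$, or (as is standard in Deift--Zhou) by the explicit solvability of the model RHP established in the next section. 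With this correction in place, the pairing estimates go through with $\|w^{A^0}\|_{L^2}=O(1)$, and the $1/a$ prefactor still delivers the claimed $O(t^{-1}\log t)$ error.
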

	Note $\int_{\Sigma_A}((1-A^0)^{-1}I)w^{A^0}(\xi)d\xi$ is connect to the following RHP, let \begin{equation}
	m^{A^0}(z)=I+\int_{\Sigma_A}\frac{((1-A^0)^{-1}I)w^{A^0}(\xi)}{\xi-z}\frac{d\xi}{2\pi i},\quad z\in \C\backslash \Sigma_A
	\end{equation}
	Then the corresponding RHP reads
	\begin{equation}
	\begin{cases}
	m^{A^0}_+(z)=m^{A^0}_-(z)v^{A^0}(z),\quad z\in \Sigma_A\\
	m^{A^0}(\infty)=I
	\end{cases}
	\end{equation}
	where 
	\begin{equation}
	v^{A^0}(z)=(1-w^{A^0}_-)^{-1}(1+w^{A^0}_+)
	\end{equation}
	Also we obtain that
	\begin{equation}
	m^{A^0}_1:=-Res(m^{A^0}(z),\infty)=\int_{\Sigma_A}((1-A^0)^{-1}I)w^{A^0}(\xi)\frac{d\xi}{2\pi i}.
	\end{equation}
	
	Similarly, we can compute for $\Sigma_B$, and since the reflection coefficient has the property that $r(z)=-\bar{r}(-\bar{z})$ and note that all the jump matrices are triangle matrix, we have the following relation:
	\begin{equation}
	\sigma_3\overline{v^{B^0}(-\bar{z})}\sigma_3=v^{A^0}(z)
	\end{equation}
	Moreover, by uniqueness of the RHP, 
	\begin{equation}
	m^{A^0}(z)=\sigma_3\overline{m^{B^0}(-\bar{z})}\sigma_3
	\end{equation}
	which implies that
	\begin{equation}
	m^{B^0}_1=-\sigma_3\overline{m^{A^0}_1}\sigma_3
	\end{equation}
	Now form the Proposition\eqref{prop:reduce to model rhp}, it follows
	\begin{equation}
	\label{eq:finall asymptotics}
	q(x,t)=\frac{-2}{a}\left[(\delta_A^0)^2(m^{A^0}_1)_{12}+\overline{(\delta_A^0)^{2}(m^{A^0}_1)_{12}}\right]+O(\frac{\log(t)}{t})
	\end{equation}
	as $t\rightarrow \infty$.
	In the rest of the section, we will solve the model RHP in terms of solutions of the parabolic-cylinder equation. The basic idea is to "close the lens", which is the inverse processing of the contour deformation ("open lens").
	\begin{figure}[h]
		\centering
		\begin{tikzpicture}[scale=0.6]
		\draw [<-<](-2,-2)--(2,2);
		\draw (-3,-3)--(3,3);
		\draw [<-<](-2,2)--(2,-2);
		\draw (-3,3)--(3,-3);
		\node [below] at (0,0) {0};
		\draw [<-<] (-2,0)--(2,0);
		\draw [-](-4,0)--(4,0);
		\node [right] at (2,1) {$\Omega_1^e$};
		\node [] at (0,2) {$\Omega_2^e$};
		\node [left] at (-2,1) {$\Omega_3^e$};
		\node [left] at (-2,-1) {$\Omega_4^e$};
		\node [] at (0,-2) {$\Omega_5^e$};
		\node [right] at (2,-1) {$\Omega_6^e$};
		\end{tikzpicture}
		
		\caption{Oriented contour $\Sigma_A$}
		\label{fig:sigma_e}
	\end{figure}
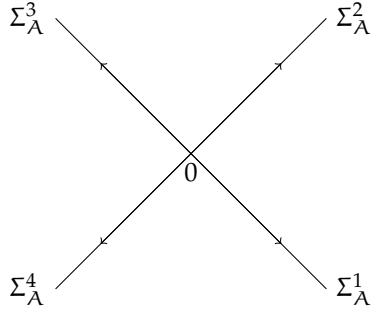
	
	First we reorient the right-half of $\Sigma_A$ , denote the new contour as $\Sigma_{A,r}$, meanwhile the new RHP data on the right half plane become $w^{A,r}_{\pm}=-w^{A^0}_{\mp}$, then  extend the contour $\Sigma_{A,r}$ to $\Sigma_e=\Sigma_{A,r}\cup \R$ by assigning 0 to the RHP data on $\R$ and mark the six regions as shown on the Fig, then define a matrix $\phi$ as
	\begin{equation}
	\phi(z)=(-z)^{vi\sigma_3}\times
	\begin{cases}
	1 & z\in \Omega_2^e\cup \Omega_5^e\\
	(b_+^{A^0})^{-1} & z\in \Omega_1^e\cup \Omega_4^e\\
	(b_-^{A^0})^{-1} & z\in \Omega_3^e\cup \Omega_6^e\\
	\end{cases}
	\end{equation}
	Conjugating $v^{A^0}$ by $\phi_-(z)v^{A^0}\phi^{-1}_+(z)$, denotes as $v^{A^0,\phi}$, we have a new RHP which only has jumps on the real line. And the jump on the real line is
	\begin{equation}
	\begin{split}
	v^{A^0,\phi}&=\phi_-(z)\phi^{-1}_+(z)\\
	&=(-z)^{i\nu \sigma_3}_-((b_-^{A^0})^{-1})b_+^{A^0}(-z)^{-i\nu \sigma_3}_+\\
	&=e^{iz^2/4\hat{\sigma}_3}\begin{pmatrix}
	1 & 0\\
	-\frac{\bar{r}(-z_0)}{1-|r(-z_0)|^2}&1\\
	\end{pmatrix}(-z)^{i\nu \sigma_3}_-(-z)^{-i\nu \sigma_3}_+
	\begin{pmatrix}
	1 & \frac{r(-z_0)}{1-|r(-z_0)|^2}\\
	0  & 1
	\end{pmatrix}\\
	&=e^{iz^2/4\hat{\sigma}_3}\begin{pmatrix}
	1 & 0\\
	-\frac{\bar{r}(-z_0)}{1-|r(-z_0)|^2}&1\\
	\end{pmatrix}(1-|r(-z_0)|^2)^{\sigma_3}
	\begin{pmatrix}
	1 & \frac{r(-z_0)}{1-|r(-z_0)|^2}\\
	0  & 1
	\end{pmatrix}\\
	&=e^{iz^2/4\hat{\sigma}_3}\begin{pmatrix}
	1-|r(-z_0)|^2 & r(-z_0)\\
	-\bar{r}(-z_0)&1\\
	\end{pmatrix}\\
	&=e^{iz^2/4\hat{\sigma}_3}v(-z_0)
	\end{split}
	\end{equation}
	
	Then let $H(z)=m^{A^0}(z)\phi^{-1}(z)$, is satisfies the following RHP :
	\begin{equation}
	\begin{cases}
	H_+(z)=H_-(z)e^{iz^2/4\hat{\sigma}_3}v(-z_0), \quad z\in \R\\
	H(\infty)=(-z)^{\nu i \sigma_3}
	\end{cases}
	\end{equation}
	Let $\Psi=He^{iz^2/4\sigma_3}$, then $\Psi_+=\Psi_-v(-z_0)$, which has a constant jump over the real line. Then it is easy to check that $\frac{d\Psi}{dz}\Psi^{-1}$ has no jump on the real line hence is entire then by Liouville's argument, we have
	\begin{equation}
	\begin{split}
	\frac{d\Psi}{dz}\Psi^{-1}&=\frac{dH}{dz}H^{-1}+H\sigma_3H^{-1}\frac{iz}{2}\\
	&=\frac{iz}{2}\sigma_3+\frac{i}{2}[\sigma_3,m^{A^0}_1]+O(z^{-1})\\
	&\equiv\frac{iz}{2}\sigma_3+\frac{i}{2}[\sigma_3,m^{A^0}_1]
	\end{split}
	\end{equation}
	Let $\beta=\frac{i}{2}[\sigma_3,m^{A^0}_1]=\begin{pmatrix}
	0 & \beta_{12}\\
	\beta_{21} & 0\\
	\end{pmatrix}$, it follows that 
	\begin{equation}
	\label{eq:pre weber's parabolic cylinder equation}
	\frac{d\Psi}{dz}=(\frac{iz}{2}\sigma_3+\beta)\Psi.
	\end{equation}
	First consider $\Im z>0$, from the equation \eqref{eq:pre weber's parabolic cylinder equation}, we obtain two second order ODEs:
	\begin{eqnarray}
	\frac{d^2}{dz^2}\Psi_{11}^{+}&=(i/2-z^2/4+\beta_{12}\beta_{21})\Psi_{11}^+\\
	\frac{d^2}{dz^2}\Psi_{21}^{-}&=(-i/2-z^2/4+\beta_{12}\beta_{21})\Psi_{21}^-\\
	\end{eqnarray}
	By setting $\Psi_{11}^+(z)=g(e^{-3\pi i/4}z)$, we have
	\begin{equation}
	\label{eq: standard equation}
	\frac{d^2}{dz^2}g(z)-(\frac{z^2}{4}+a)g(z)=0
	\end{equation}
	where $a=-\frac{1}{2}+i\beta_{12}\beta_{21}.$ This is the Weber's parabolic cylinder equation, search this on the Digital Library of Mathematics Functions, we have the asymotics for the solutions when $z\rightarrow \infty$, for reader's convenience, we copy the asymptotic expansions here:
	\begin{equation}
	\begin{split}
	U(a,z)&\sim e^{-\frac{1}{4}z^2}z^{-a-1/2}\sum_{s=0}^{\infty}(-1)^s\frac{(1/2+a)_s}{s!(2z^2)^s},\quad |\arg(z)|<\frac{3\pi }{4}\\
	&\sim e^{-\frac{1}{4}z^2}z^{-a-1/2}\sum_{s=0}^{\infty}(-1)^s\frac{(1/2+a)_s}{s!(2z^2)^s}\\
	&\pm i\frac{\sqrt{2\pi}}{\Gamma(1/2+a)}e^{\mp i\pi a}e^{\frac{1}{4}z^2}z^{a-1/2}\sum_{s=0}^{\infty}(-1)^s\frac{(1/2-a)_s}{s!(2z^2)^s},\quad \frac{1}{4}\pi<\pm \arg(z)<\frac{5}{4}\pi
	\end{split}
	\end{equation}
	from the digital library, we know that the Wronskian $W\{U(a,z),U(a,-z)\}=\frac{\sqrt{2\pi}}{\Gamma(1/2+a)}$ is non-zero as long as $a+1/2$ is not a non-positive integer. 
	For now, assume that it is true, then the solution of the equation\eqref{eq: standard equation} can be represented by
	\begin{equation}
	g(z)=c_1U(a,z)+c_2U(a,-z).
	\end{equation}
	And we know that as $z=e^{1/4\pi i}\sigma \rightarrow \infty$, $\Psi_{11}^{+}=(-e^{i\pi/4})^{i\nu}e^{-\sigma^2/4}=e^{\nu i (\log\sigma-i\frac{3}{4}\pi)}e^{-\sigma^2/4}$, compare it with the asymptotic expansion of $g$, we have 
	\begin{equation}
	c_2=0,a=-\nu i -1/2,c_1=e^{\frac{3}{4}\pi\nu}
	\end{equation}
	so that
	\begin{equation}
	\Psi_{11}^{+}(z)=e^{\frac{3}{4}\pi\nu}U(a,e^{-\frac{\pi}{4}i}z),\quad \Im z>0
	\end{equation}
	Similary, we have for $\Im z<0$,
	\begin{equation}
	\Psi^-_{11}(z)=e^{-\frac{\pi\nu}{4}}U(a,e^{\frac{3\pi i}{4}}z)
	\end{equation}
	Meanwhile  we have
	$\Psi_{21}=\beta_{12}^{-1}\left(\frac{d}{dz}\Psi_{11}-\frac{iz}{2}\Psi_{11}\right)$, so $\Psi_{21}^{\pm}$ can be automatically represented by $\Psi_{11}^{\pm}$. Also we have
	$$\Psi_-^{-1}\Psi_+=v(-z_0)=\begin{pmatrix}
	1-|r(-z_0)|^2& r(-z_0)\\
	-\bar{r}(-z_0)&1\\
	\end{pmatrix},$$ comparing both sides we have the following relation:
	\begin{equation}
	\begin{split}
	-\bar{r}(-z_0)&=\Psi_{11}^-\Psi_{21}^+-\Psi_{21}^-\Psi_{11}^{+}\\
	&=\beta_{12}^{-1}[\Psi_{11}^-(\Psi_{11}^+)'-(\Psi_{11}^-)'\Psi_{11}^+]\\
	&=\beta_{12}^{-1}e^{\pi \nu/2}W\{U(a,e^{3\pi i/4}z),U(a,e^{-\pi i/4}z)\}\\
	&=\frac{e^{\pi \nu/2}e^{3\pi i/4}\sqrt{2\pi}}{\beta_{12}\Gamma(-\nu i)}\quad\quad (\text{see [DLMF] equation (12.2.11)})
	\end{split}
	\end{equation}
	
	Thus,
	\begin{equation}
	\beta_{12}=\frac{e^{\pi \nu /2}\sqrt{2\pi}e^{3\pi i/4}}{-\bar{r}(-z_0)\Gamma(-\nu i)}
	\end{equation}
	and
	\begin{equation}
	\beta_{21}=-\nu/\beta_{12}=\frac{e^{\pi \nu /2}\sqrt{2\pi}e^{-3\pi i/4}}{r(-z_0)\Gamma(\nu i)}.
	\end{equation}
	
	As mentioned before, we assume the Wroskian is non-zero. In fact it is true provided that $\frac{1}{\Gamma(1/2+a)}=\frac{1}{\Gamma(-iv)}$ is not zero since $\nu=-\frac{1}{2\pi}\log(1-|r(-z_0)|^2)>0$.
	Note also we have
	\begin{eqnarray}
	(m_1^{A^0})_{21}&=&i\beta_{21}\\
	(m_1^{A^0})_{12}&=&-i\beta_{12}
	\end{eqnarray}
	Finally, substituting back to equation \eqref{eq:finall asymptotics}, we obtain
	\begin{equation}
	\begin{split}
	q(x,t)&=\frac{-2}{a}\left[(\delta_A^0)^2(m^{A^0}_1)_{12}+\overline{(\delta_A^0)^{2}(m^{A^0}_1)_{12}}\right]+O(\frac{\log(t)}{t})\\
	&=\frac{-2}{a}[e^{2\chi(-z_0)}e^{-2it\theta(-z_0)}(2az_0)^{2i\nu}\frac{e^{\pi \nu /2}\sqrt{2\pi}e^{5\pi i/4}}{\bar{r}(-z_0)\Gamma(-\nu i)}\\
	&+e^{-2\chi(-z_0)}e^{2it\theta(-z_0)}(2az_0)^{-2i\nu}\frac{e^{\pi \nu /2}\sqrt{2\pi}e^{3\pi i/4}}{r(-z_0)\Gamma(\nu i)}]
	\\&+O(\frac{\log(t)}{t})\\
	\end{split}
	\end{equation}
	as $t\rightarrow \infty.$ By simplifying this we get the result\eqref{main result} as being reported at the introduction .
	%
	%
	%
	%
	%
	%
	%

\bibliography{testbib}

\end{document}